\newcommand{\footremember}[2]{%
    \footnote{#2}
    \newcounter{#1}
    \setcounter{#1}{\value{footnote}}%
}
\newtheorem{theorem}{Theorem}
\newtheorem{remark}{Remark}
\newtheorem{corollary}{Corollary} 
\newtheorem{proposition}{Proposition}
\newcommand{\argmin}{\mathop{\mathrm{argmin}}}
\newcommand{\minimize}{\mathop{\mathrm{minimize}}}
\def\E{\mathbb{E}}
\def\diag{\mathrm{diag}}
\def\R{\mathbb{R}}
\def\S{\mathcal{S}}
\theoremstyle{definition}
\newtheorem{definition}{Definition}
\theoremstyle{prop}
\newtheorem{prop}{Proposition}
\def\simiid{\stackrel{iid}{\sim}}
\newcommand{\norm}[1]{\left\lVert#1\right\rVert}
\title{Log-ratio Lasso: Scalable, Sparse Estimation for Log-ratio Models}
\author{Stephen Bates\footremember{stephen}{Department of Statistics, Stanford University.  Email: stephenbates@stanford.edu.}
 \and Robert Tibshirani \footremember{rob}{Departments of Biomedical Data Science and Statistics, Stanford University. Email: tibs@stanford.edu.}}
\begin{document}
\maketitle

\begin{abstract}
Positive-valued signal data is common in many biological and medical applications, where the data are often generated from imaging techniques such as mass spectrometry. In such a setting, the relative intensities of the raw features are often the scientifically meaningful quantities, so it is of interest to identify relevant features that take the form of log-ratios of the raw inputs. When including the log-ratios of all pairs of predictors, the dimensionality of this predictor space becomes large, so computationally efficient statistical procedures are required. We introduce an embedding of the log-ratio parameter space into a space of much lower dimension and develop efficient penalized fitting procedure using this more tractable representation. This procedure serves as the foundation for a two-step fitting procedure that combines a convex filtering step with a second non-convex pruning step to yield highly sparse solutions. On a cancer proteomics data set we find that these methods fit highly sparse models with log-ratio features of known biological relevance while greatly improving upon the predictive accuracy of less interpretable methods.
\end{abstract}

\section{Introduction}

In biological and medical sciences, imaging techniques are a common tool for investigating the molecular composition of a sample of interest. These methods give a intensity values across a range of frequency values, and a data processing step translates the raw spectrum into intensity values for many molecules of interest. With such data, meaningful standardization of the resulting features is difficulty, so it is desirable to treat this as compositional data.  In a medical imaging setting, a common task of interest is to predict disease status from a tissue sample.  Recent results by \cite{banerjee-et-al} show that such data  can be used to detect the presence of prostate cancer from tissues samples with high accuracy.  

In both biological and medical settings, it is desirable to have a model that combines high predictive accuracy with interpretability and parsimony.  Parsimonious models can lead to (i) novel scientific insights (ii) increased ease of model checking and confidence in the model.  This second point is especially important in a medical context where a method may be deployed for use in a hospital setting. Deployment requires very high confidence in the effectiveness and robustness of a predictive learning model.  Model interpretability allows domain experts to add another layer of validation to complement traditional statistical validation using a training set and a test set.  As a result, such a method is much more likely to be trusted by safety advisory boards that mush approve deployment of new treatment methods.

Variable selection in regression models is a well-studied topic in the statistical literature.  Classically, best subset selection, forwards stepwise selection, and backward stepwise selection are commonly used methods for variable selection [see e.g. \cite{esl}].  To determine the size of the final model selected, a practitioner may use Mallow's Cp/AIC [\cite{aic}, \cite{cp}], BIC [\cite{bic}], or cross-validation prediction error.  A more modern approach that has enjoyed widespread success is the lasso estimator introduced by \cite{lasso}.  The lasso uses $L_1$-penalization to induce sparse models, an approach that subsequently has been fruitfully applied in a variety of other variable selection contexts.  Similarly, methods for compositional data are well developed in the statistical literature [\cite{compositional-data}]. Compositional data is usually transformed to real-valued data using a logarithmic transform after either (i) transforming all features to their relative intensity to an arbitrarily chosen baseline feature [\cite{log-contrasts}] or (ii) standardizing each observation by the geometric mean of the features [\cite{compositional-pca}].  \cite{constrained-lasso} utilize this second standardization scheme together with $L_1$-penalization to do variable selection with compositional data.  

\subsection{Our Contribution}
In this work, we introduce a new notion of sparsity for compositional data based on the log-ratio model. For sparse estimation, we use a $L_1$-penalized regression as introduced in \cite{lasso}.  We prove the equivalence of our method with that of a modified lasso estimator in a lower-dimensional space. We then introduce a second pruning step to fit even sparser models after giving a mathematical argument for why this additional step is required.  Our mathematical characterization of the log-ratio feature space also gives rise to an approximate forward stepwise algorithm for this same setting.  To the best of the authors' knowledge, there are no existing efficient fitting algorithms for these models.  We further use this characterization to develop a novel post-selective goodness-of-fit test.  In a simulation study, we find that when the true underlying model has log-ratio structure, our estimator greatly improves upon generic approaches.  When applied to a medical proteomics data set, the method improves the predictive performance relative to baseline methods and discovers features of biological importance.

\subsection{Outline}
This paper proceeds as follows.  In section \ref{sec:lr-models} we introduce log-ratio models and give a brief exposition of their properties.  In section \ref{sec:lrl-estimator}, we propose an estimator for this family of models, and formally establish the relationship between our estimator and a standard lasso estimator in a larger space.  We then develop an additional pruning stage which leads to much sparser model fits. In section \ref{sec:afs} we propose an approximate forward stepwise algorithm as an additional scalable estimator for this family of models.  In section $\ref{sec:testing}$, we explain how to test the appropriateness of a log-ratio model using classical testing and develop a novel post-selective test of model fit.  In section \ref{sec:sims} we examine these estimators in simulation experiments.  In section \ref{sec:real-data}, we apply these estimators to a proteomics data set to diagnose cancer from metabolite intensity.  Lastly, we offer a few concluding remarks in section \ref{sec:discussion}.

\section{Log-ratio Models}\label{sec:lr-models}
We begin by introducing log-ratio models and briefly explain the appeal of such models. Let $Y = (y_1,...,y_n)$ be the response variable and let $X$ be the $n \times p$ matrix of positive features, which does not include an intercept term.  In the remainder of this work, we will be interested in models of the form
\begin{equation}
\label{eq:logratio}
y_i =  \mu + \sum_{1 \le j < k \le p}\theta_{j,k}\log(x_{i,j} / x_{i,k}) + \epsilon_i
\end{equation}
for $i = 1,...,n$.  The $\epsilon_i$ are mean zero noise terms with common variance.  In particular, we are interested in fitting such models under the assumption that most of the entries $\theta_{j,k}$ for $1 \leq j < k \leq p$ are zero.  If we let $Z$ be the $n \times \binom{p}{2}$ matrix with columns $\log(x_{j} / x_{k})$ for $1 \leq j < k \leq p$ then we see that this is a linear model in $Z$.
\begin{equation*}
Y = Z \theta + \epsilon.
\end{equation*}
Such models are attractive for several reasons:

\begin{enumerate}[(a)]
\item {\bf Predictive accuracy.}
In settings such medical imaging, we may expect that the {\em relative} intensity a small number of raw features is what carries information about the response variable. Furthermore, in such imaging settings, the distribution of the raw features is often very skewed with a few very large values and many small values.  In this case, a logarithmic transform of the data can make the feature matrix less skewed, improving the predictive performance of the model.

\item{\bf Interpretability and scientific relevance.}
For reasons related to those above, in some fields (e.g. biochemistry) it is standard to work with log-ratio quantities in the course of a scientific investigation.  In such a case, a statistical model that consists of a few log-ratios has much higher interpretability than other approaches.  Researchers with domain knowledge can more easily understand both the predictors that are selected and the effect size associated with these predictors. This enhances scientific insight, and the increased transparency of the model leads to greater robustness and confidence in the modeling process.

\item{\bf Robustness to standardization choices.}
Log-ratio models are a model for compositional data: they are invariant to the multiplicative scaling of each observation.  The prediction for observation $i$ depends only on terms of the form $x_{i,j} / x_{i,k}$, which remain unchanged when the whole feature vector $(x_i)_{1=1,...,p}$ is scaled by some constant.  This property is relevant with positive signal data, since the processing of the data from raw functional form to intensities of various molecules requires arbitrary choice of scale {\em for each observation}.  We emphasize that standardizing the {\em features} in later stages does not address this problem.  Although we do not often explicitly model the distribution of the features $X$, there is of course some noise in these values.  The choice of scaling is implicitly multiplying each observation by some random constant. This is introducing additional noise into the model, which we can roughly interpret as increasing the variance $\sigma^2$ of the noise terms.  Methods that are invariant to this scaling avoid the resulting increase in variance.  Parameterizing the problem in terms of all log-ratios is appealing because it both maintains symmetry and allows for easy interpretation.
\end{enumerate}

\section{The Log-ratio Lasso Estimator}\label{sec:lrl-estimator}
Motivated by the linear model formulation in (\ref{eq:logratio}), we next propose an estimator for the sparse log-ratio model setting and examine its properties.  
  
\begin{definition}[Log-ratio lasso]
The {\it log-ratio lasso} fit is defined to be:
\begin{equation}\label{eq:lr-lasso}
\theta \in \argmin_{\theta} \frac{1}{2}\sum_{i = 1}^n [y_i - \mu - \sum_{1 \le j < k \le p}\theta_{j,k}\log(\frac{x_{i,j}}{x_{i,k}})]^2 + \lambda \norm{\theta}_1.
\end{equation}
\end{definition}
Here $\theta_{i,j}$ is defined only for $i < j$, so this can be represented as a vector in $\R^{\binom{p}{2}}.$  $\lambda$ is a tuning parameter.  This is the usual lasso estimator with the expanded feature matrix $Z$.  It is a well-known that the solutions $\theta$ will be sparse for appropriate values of $\lambda$.  This estimator is statistically quite natural, but solving this optimization problem is challenging because it is an optimization in $\binom{p}{2}$ variables and requires explicit construction of the $n \times \binom{p}{2}$ matrix of predictors $Z$. For even moderate values of $p$, this would require a large amount of memory and computation time.  For large values of $p$, this becomes intractable.  

\subsection{Low-dimensional Characterization of the Log-ratio Lasso}
The log-ratio lasso is an appealing variable selection technique for our setting, but in its original form it is not practical for large $p$ because solving such an optimization problem in $\binom{p}{2}$ variables will have prohibitively large running time and memory requirements.  A major component of our contribution is to show that this estimator can be recovered from the solution of a much simpler optimization problem in only $p$ variables, that does not require the construction of the expanded feature matrix $Z$. We proceed by defining a simpler optimization problem and connecting it to the log-ratio lasso estimator.

\begin{definition}[Linearly constrained lasso]
The {\it linearly constrained lasso} fit is defined to be:
\begin{equation} \label{eq:constr-lasso}
\beta = \argmin_{\beta: \sum_{j=1}^p \beta_j = 0} \frac{1}{2}\sum_{i = 1}^n [y_i - \mu - \sum_{j = 1}^p \beta_j \log(x_{i,j})]^2 + \gamma \norm{\beta}_1.
\end{equation}
\end{definition}
Notice this is a standard lasso on the logarithmically transformed variables, with that additional constraint that the sum of the coefficients is zero. This is a convex optimization problem with $p$ variables and 1 linear constraint, and can be solved efficiently. We provided details about how to efficiently solve this optimization problem using any standard lasso solver in the Appendix.  This estimator was first studied in \cite{constrained-lasso}, also in the context of compositional data.  In that work, however, the authors were not interested in the log-ratio models (\ref{eq:logratio}).  This work develops a different notion of sparsity.  We now turn to the relationship between this estimator and the log-ratio lasso estimator.

\begin{theorem}
The log-ratio lasso problem and the linearly constrained lasso problem are equivalent for $\lambda = 2\gamma$. 
\label{thm:main}
\end{theorem}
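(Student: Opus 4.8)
The plan is to reduce both problems to a common form. First I would introduce the linear map $T \colon \R^{\binom{p}{2}} \to \R^p$ defined by
\[
(T\theta)_j = \sum_{k > j}\theta_{j,k} - \sum_{l < j}\theta_{l,j},
\]
and observe, by expanding $\log(x_{i,j}/x_{i,k}) = \log x_{i,j} - \log x_{i,k}$ and collecting the coefficient of each $\log x_{i,j}$, that
\[
\sum_{1 \le j < k \le p}\theta_{j,k}\log\!\left(\frac{x_{i,j}}{x_{i,k}}\right) = \sum_{j=1}^p (T\theta)_j \, \log x_{i,j}.
\]
Hence, writing $\beta = T\theta$, the quadratic loss of the log-ratio lasso at $\theta$ is exactly the quadratic loss of the linearly constrained lasso at $\beta$. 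A one-line calculation shows $\sum_{j=1}^p (T\theta)_j = 0$ for every $\theta$ (each $\theta_{j,k}$ enters once with $+1$ and once with $-1$), so $T$ maps into the feasible set of (\ref{eq:constr-lasso}); conversely, since $K_p$ is connected, every $\beta$ with $\sum_j \beta_j = 0$ lies in the image of $T$.

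The crux is to reconcile the two penalties, and this reduces to the following claim: for any $\beta$ with $\sum_{j=1}^p \beta_j = 0$,
\[
\min\{\,\norm{\theta}_1 : T\theta = \beta\,\} = \half \norm{\beta}_1.
\]
I would prove this by viewing $\theta$ as a signed flow on the complete graph $K_p$, with the edge $\{j,k\}$ oriented from the smaller to the larger index and carrying flow $\theta_{j,k}$; then $\beta = T\theta$ is precisely the vector of net out-flows (divergences) at the vertices. The lower bound is immediate from the triangle inequality, since $|\beta_j| \le \sum_{k>j}|\theta_{j,k}| + \sum_{l<j}|\theta_{l,j}|$, and summing over $j$ counts each $|\theta_{j,k}|$ exactly twice, giving $\norm{\beta}_1 \le 2\norm{\theta}_1$. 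For the matching upper bound I would exhibit a flow supported only on edges joining the source set $\{j : \beta_j > 0\}$ to the sink set $\{k : \beta_k < 0\}$: because $K_p$ is complete, a transportation plan that pushes $\beta_j$ units out of each source and $|\beta_k|$ units into each sink always exists, and such a plan carries no cancelling flow, so its $L_1$-norm equals the total supply $\sum_{j : \beta_j > 0}\beta_j = \half\norm{\beta}_1$.

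Finally I would assemble the pieces. Because the loss depends on $\theta$ only through $\beta = T\theta$, minimizing the log-ratio lasso objective is equivalent to minimizing, over $\beta$ with $\sum_j \beta_j = 0$, the shared loss plus $\lambda \cdot \min\{\norm{\theta}_1 : T\theta = \beta\} = \lambda \cdot \half\norm{\beta}_1$. Setting $\lambda = 2\gamma$ makes this coincide with the linearly constrained lasso objective (\ref{eq:constr-lasso}), so the two optimal values are equal; moreover any solution $\beta$ of the constrained problem lifts to a solution of the log-ratio problem via the optimal transport flow, and any solution $\theta$ of the log-ratio problem projects to a solution of the constrained problem via $T$. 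I expect the main obstacle to be the key lemma: recognizing the minimum-$L_1$-norm preimage under $T$ as an optimal-transport problem on $K_p$ and constructing the tight flow is where the genuine content lies, whereas the identity of the losses and the final assembly are routine bookkeeping.
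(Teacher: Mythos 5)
Your proof is correct, and its skeleton coincides with the paper's: you use the identical linear map (the paper's $b$ is your $T$), the same loss-matching computation, and the same observation that the image of $T$ lies in the zero-sum hyperplane; these are exactly the first two propositions of the paper's Appendix. The genuine difference is in the crux, the factor-of-two penalty relation. The paper proves two separate statements: an existence result (for every zero-sum $\beta$ there is a preimage $\theta$ with $\norm{\beta}_1 = 2\norm{\theta}_1$, exhibited explicitly by the proportional plan $\theta_{i,j} = 2|\beta_i|\,|\beta_j|/\norm{\beta}_1$ between positive and negative coordinates), and then, by contradiction, that any \emph{optimal} $\theta$ of the log-ratio lasso satisfies $\norm{b(\theta)}_1 = 2\norm{\theta}_1$, arguing that cancelling flow can be rerouted to strictly reduce the $L_1$ norm without changing the fitted values. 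You replace that second, optimality-based argument with the universal variational identity $\min\{\norm{\theta}_1 : T\theta = \beta\} = \frac{1}{2}\norm{\beta}_1$, whose lower bound is a triangle inequality plus double counting (each $|\theta_{j,k}|$ is counted at exactly its two endpoints), and you then finish by partial minimization over the fibers of $T$. This buys robustness and economy: the triangle-inequality bound holds for \emph{all} $\theta$, not just optimal ones, so you never need the paper's rerouting case analysis (which the paper only sketches via a ``without loss of generality'' and is arguably the weakest step of its proof), and the fact that optima satisfy the tight relation falls out of your assembly as a corollary rather than requiring separate proof. What the paper's route buys is explicitness: your upper bound only asserts that a source-to-sink transportation plan with no cancelling flow exists; that assertion is standard and true, but a fully self-contained write-up should exhibit one --- the paper's proportional plan, or a greedy northwest-corner plan, does the job in one line.
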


This theorem is important because implies that we can fit the log-ratio lasso efficiently, even for large values of $p$. 

\subsubsection*{Definition of equivalence}
We must explain in detail what we mean by ``equivalent'' in the statement of the previous theorem.  Let $W = \log(X)$ be the matrix obtained by taking the element-wise logarithm of $X$. Let $b: \R^{\binom{p}{2}} \to \R^p$ be the linear map that takes a $\theta$ from a log-ratio lasso feature space to the corresponding $\beta$ in the standard feature space:
\begin{equation*}
b(\theta)_k = \sum_{j = 1}^{k-1} - \theta_{j,k} + \sum_{j = k + 1}^p \theta_{k,j}.
\end{equation*}
This definition implies $Z \theta = W[b(\theta)]$. The log-ratio lasso fit is not unique, so ``equivalence'' in the preceding theorem means that the same minimum value of the objective function is achieved for both functions, and for any solution $\theta$ of the log-ratio lasso optimization, $\beta = b(\theta)$ is a solution of the linearly constrained lasso optimization. By the following corollary, the solution to the linearly constrained lasso optimization will typically be unique when $n > p$.

\begin{corollary}
The minimizer of the constrained lasso is unique if the matrix $W$ has full rank. 
\end{corollary}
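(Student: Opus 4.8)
The plan is to argue via strict convexity. First I would fix the intercept $\mu$ (which is held constant, not optimized over, in the definition), set $\tilde Y = Y - \mu \mathbf{1}$ with $\mathbf{1}$ the all-ones vector, and write the objective purely as a function of $\beta$:
\[ g(\beta) = \frac{1}{2}\norm{\tilde Y - W\beta}_2^2 + \gamma \norm{\beta}_1, \]
to be minimized over the affine feasible set $\cF = \{\beta \in \R^p : \mathbf{1}^T \beta = 0\}$.

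The key observation is that full rank of $W$ --- which here means full \emph{column} rank, since we are in the regime $n > p$ --- makes the quadratic loss strictly convex. Indeed, its Hessian is $W^TW$, and this matrix is positive definite exactly when $W$ has full column rank; equivalently, the map $\beta \mapsto W\beta$ is injective, so the loss is not flat along any direction. Since the $\ell_1$ penalty $\gamma\norm{\beta}_1$ is convex, and adding a convex function to a strictly convex one preserves strict convexity, the objective $g$ is strictly convex on all of $\R^p$, hence also on the affine subspace $\cF$.

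Next I would establish existence of a minimizer: because $W$ has full column rank, $\norm{W\beta}_2 \to \infty$ as $\norm{\beta}_2 \to \infty$, so $g$ is coercive and attains its infimum over the closed set $\cF$. Finally, I would invoke the standard fact that a strictly convex function has at most one minimizer over a convex set: were $\beta_1 \neq \beta_2$ both minimizers, their midpoint would lie in $\cF$ by convexity and would achieve a strictly smaller objective value, a contradiction. Combining existence with this at-most-one conclusion yields uniqueness.

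The argument is essentially routine, so I do not anticipate a serious obstacle; the only point demanding care is the reading of ``full rank.'' The claim genuinely requires full column rank (rank $p$), which forces $n \geq p$ and explains why the accompanying remark restricts attention to $n > p$. When $n < p$ the Gram matrix $W^TW$ is singular and strict convexity fails, so uniqueness could then only be hoped for under the weaker general-position conditions familiar from the lasso literature; assuming full rank conveniently avoids those complications.
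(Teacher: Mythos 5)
Your proof is correct, but it takes a genuinely different route from the paper's. The paper argues through lasso KKT theory: it invokes the known fact that \emph{all} solutions of an $\ell_1$-penalized least-squares problem share the same fitted values (citing the lasso-uniqueness literature), and then uses full column rank of $W$ only to pass from the unique fit $W\beta$ to a unique coefficient vector $\beta$, since $\beta \mapsto W\beta$ is injective. You instead argue directly: $W^TW \succ 0$ makes the squared-error loss strictly convex in $\beta$, adding the convex $\ell_1$ penalty preserves strict convexity, and a strictly convex function has at most one minimizer over the convex feasible set $\{\beta : \mathbf{1}^T\beta = 0\}$, with existence supplied by coercivity. Your argument is more elementary and self-contained --- it needs no external lemma and no detour through the log-ratio lasso equivalence. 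The paper's argument buys something your route cannot: the ``unique fitted values'' step holds \emph{without} any rank assumption, so it cleanly isolates full rank as the hypothesis needed only for the fit-to-coefficients step; this decomposition remains informative in the $p > n$ regime where strict convexity fails outright, exactly the limitation you flag at the end. Two small remarks: your handling of the intercept (fixing $\mu$ rather than optimizing over it) matches the literal statement of the definition, but if $\mu$ were jointly estimated, strict convexity would require the augmented matrix $[\mathbf{1}\ W]$ to have full column rank, a slightly stronger condition; and coercivity actually follows from the penalty term $\gamma\norm{\beta}_1$ alone when $\gamma > 0$, so that step does not need full rank at all.
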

\begin{proof}
From the KKT conditions of the standard lasso optimization, one can show that all solutions $\theta$ of the log-ratio lasso have equivalent fitted values $Z\theta$ [see e.g. \cite{tibs-lasso-uniqueness}].  Since $W$ has full rank, there is a unique $\beta$ such that $Z\theta = W\beta$ and the result follows.
\end{proof}

\subsubsection*{Extensions beyond the linear model}
We note in passing that this theorem also applies to $L_1$-penalized GLMs[\cite{glm}] and the $L_1$-penalized Cox proportional hazards model[\cite{cox-model}, \cite{cox-lasso}], which can be seen from the argument in the Appendix.  It would be of great interest to further pursue the case of logistic regression, which is the natural model when dealing with binary outcome data such as disease status.  Similarly, the Cox model is of great interest because survival time outcomes are common in medical contexts.  For the remainder of this work, however, we limit ourselves to $L_1$-penalized linear regression.

\subsubsection*{Mathematical intuition}
We now give some intuition for the preceding theorem.  A formal proof is provided in the Appendix.  Notice that any model $Y = Z\theta + \epsilon$ corresponds to a model $Y = W\beta + \epsilon$ with the mapping $\beta = b(\theta)$ from the preceding paragraph.  Furthermore, we have that $\sum_{k=1}^p \beta_k = \sum_{k=1}^p[\sum_{j=1}^{k-1} -\theta_{j,k} + \sum_{j = k+1}^p \theta_{k,j}] = 0$, since each term $\theta_{j,k}$ for $j < k$ appears once with positive sign and once with negative sign.  Thus when searching for a model in the form $W\beta$, we can restrict out attention to vectors $\beta$ such that $\sum_{k=1}^p\beta_k = 0$.

Take $\beta = b(\theta)$.  We now turn to the connection between $\norm{\theta}_1$ and $\norm{\beta}_1$.  It is not true that $2 \norm{\theta}_1 = \norm{\beta}_1$ in general, but it does hold in some cases.  Loosely speaking, this relationship holds whenever $\theta$ does not lead to redundant linear combinations of columns.
In the Appendix we show that among the vectors $\theta^\prime$ such that $Z\theta = Z\theta^\prime$, the minimum $L_1$-norm solution satisfies $2 \norm{\theta^\prime}_1 = \norm{\beta}_1$.  For other choices of  $\theta^\prime$, the $L_1$ norm can be reduced without changing the model fit, and those values of $\theta^\prime$ will not be selected by the log-ratio lasso procedure.  We then show that for relevant $\beta$, there exists a corresponding $\theta$ with this property, and the theorem follows.

\subsection{Relationship with the lasso}
From this formulation of the problem, we can see that the log-ratio lasso is finding a model of the form 
\begin{equation}
\label{eq:loglinearmodel}
\hat{y}_i = \mu + \sum_{j=1}^p \beta_j \log{x_{i,j}} \ (i = 1,...,n).
\end{equation}
The log-ratio lasso differs from the standard lasso in that it is searching for models that can be represented as a sparse collection of ratios, rather than only a sparse coefficient vector $\beta$.  The lasso on the transformed set of features $(\log{x_{i,j}})_{j=1,...,p}$ and the log-ratio lasso are both fitting models within this family, but they differ in the way that they search among the candidate models.  We will see in simulation experiments that when the true model consists of a few log-ratios, the specialized log-ratio lasso has better performance.  This is analogous to how lasso and ridge regression sometimes have very different performance, despite the fact that both methods are fitting linear models from the same model family.

\subsection{Non-uniqueness of solution}
There are linear dependencies among the features in the expanded feature matrix $Z$ because there are $\binom{p}{2}$ features takes of form $\log(x_i) - \log(x_j)$, which together have a linear span of only dimension $p - 1$.  Since $Z$ is not full rank, the OLS solution regressing $Y$ onto $Z$ is not well-defined.  A standard way to fix such non-identifiability is to add a penalty term.  Interestingly, in the model $Y = Z\theta + \epsilon$, the addition of $L_1$-penalization is not sufficient to make the model fit unique.

\subsubsection*{Explicit example}
To demonstrate the non-uniqueness in the solutions $\theta$, suppose we have fit the following model:
\begin{equation*}
\hat{y} = 2 \log(x_1) + 1 \log(x_2) - 2 \log(x_3) - 1\log(x_4).
\end{equation*}
We can represent this as $y = Z\theta$ for several different values of $\theta$ with equivalent $L_1$ norm:

\begin{enumerate}[(a)]
\item $\hat{y} = 2 [\log(x_1) - \log(x_3)] + 1 [\log(x_2) - \log(x_4)]$. Here $\norm{\theta}_1 = 2$.

\item $\hat{y} = 1.5 [\log(x_1) - \log(x_3)] + .5 [\log(x_1) - \log(x_4)] + 
.5 [\log(x_2) - \log(x_3)] + .5 [\log(x_2) - \log(x_4)]$.  Here $\norm{\theta}_1 = 2$.

\item $\hat{y} = 1.7 [\log(x_1) - \log(x_3)] + .3 [\log(x_1) - \log(x_4)] + 
.3 [\log(x_2) - \log(x_3)] + .7 [\log(x_2) - \log(x_4)]$. Here $\norm{\theta}_1 = 2$.

\end{enumerate}

This ambiguity in $\theta$ occurs whenever there are at least 2 disjoint log-ratios that are nonzero. The solution $\theta$ may no be unique unique, but we noted earlier that $b(\theta)$ is unique unless the matrix $W$ is not of full rank. Although there may exist many equivalent solutions $\theta$ with the same $L_1$-norm, we see that some of these solutions may be sparser than others. Since even $L_1$-penalization is not enough to enforce the desired sparsity in $\theta$, a more demanding selection criterion is needed. We now turn to a method for finding a highly sparse solution $\theta$.

\subsection{Two-stage procedure} \label{subsec:two-stage}
Our goal is to find a succinct model of the form (\ref{eq:logratio}) that explains the data well. In the preceding section, we saw that there is some ambiguity resulting from the solution to the log-ratio lasso optimization problem.  We now turn to a method for finding a highly sparse parameter $\theta$.  We will use the log-ratio lasso solution as a baseline solution, followed by a secondary sparse regression procedure: \\

\begin{algorithm}[H] \label{alg:two-stage}
{\bf Algorithm \ref{alg:two-stage}:} Two-stage Log-ratio Lasso
\begin{enumerate}
\item Fit the log-ratio lasso with tuning parameter $\lambda$.
\item Using variables in the support from step one, enumerate all log ratios into a matrix $\tilde{Z}$.
\item Use the sparse regression procedure to fit $y$ onto $\tilde{Z}$, with tuning paramemter $\gamma$.
\end{enumerate}
\end{algorithm} \ \\

 Examples of such procedures include forward stepwise, backward stepwise, or best subset selection.  There are several other sub-$L_1$ methods explored in the literature such as the $MC+$[\cite{mcp}] and SCAD[\cite{scad}] penalties, and $MC+$ penalized linear regression is readily available in the {\tt sparsenet} [\cite{sparsenet}] R package. In algorithm \ref{alg:two-stage}, the parameters $\lambda$ and $\gamma$ should be chosen by joint cross-validation over a grid of values.  Using warm-starts for the solutions of the two optimization problems in step 1 and step 3 means that this takes much less time than fitting the model from scratch at each value of the parameter.  When using forward-stepwise regression in step 3, the model fitting is particularly fast. In this paper, we concentrate on forward-stepwise regression, because it works in the GLM case and scales to large data sets.

We can view this as a class of regression procedures that produce models that are even sparser than the log-ratio lasso. Due to the size of predictor space ($\binom{p}{2}$ variables) we use lasso as a screening step, followed by a `sparser' regression procedure.  In some cases fitting something as simple as a forward stepwise regression of $y$ onto the original $Z$ would be computationally expensive, so this screening step is necessary.  This algorithm provides both a statistical and computational improvement over forward stepwise selection for large $p$.  

As an alternative, one can fit the sparse regression procedure in step 3 using the predicted values $\hat{y}$ from step 1 instead of the observed values $y$.  This will result in a final model that is more similar to the single-stage log-ratio lasso fit, but the terms will be paired into ratios for easier interpretation.  We will refer to this variant as {\em conservative two-stage log-ratio lasso}, since by maintaining the shrinkage from the first stage it will usually have less variance (at the price of more bias) than the standard two-stage procedure.  We empirically study the performance of both estimators in simulation experiments in section \ref{sec:sims}. 
 
\subsection{Including unpaired logarithm terms}
In some circumstances, it may be desirable to search for a model that includes both log-ratios and unpaired log terms:
\begin{equation*}
y_i =  \sum_{j=1}^p \beta_j \log(x_j) + \sum_{j < k}^p\theta_{j,k}\log(x_{i,j} / x_{i,k}).
\end{equation*}
The span of this model is again contained in the span of model (\ref{eq:loglinearmodel}), but in this case the analyst wishes to favor the selection log-ratio terms wherever possible. Fitting this model requires only a simple modification of the log-ratio lasso.  We augment the feature matrix with the vector of ones: $x_{p+1} = 1$.  Since $\log(x_i / x_{p+1}) = \log(x_i)$, the log-ratio lasso with the augmented feature matrix can select unpaired terms whenever it is beneficial to do so.  This model fit gives a compromise between the standard lasso on the logarithmically transformed features and the log-ratio lasso.

\section{Approximate Forward Stepwise Selection} \label{sec:afs}

We have introduced a procedure to efficiently fit large log-ratio models based on the lasso.  We now introduce an alternative greedy fitting procedure that also takes advantage of the unique log-ratio structure of our setting.  In analogy with the log-ratio lasso, one might try simply running forward stepwise selection on the expanded feature set consisting of all log ratios.  As in the previous section, running forward stepwise selection on the expanded feature matrix $Z$ of size $n \times \binom{p}{2}$ quickly becomes infeasible for large $p$. We now show how the log-ratio structure gives rise to a simple modification of the forward stepwise selection procedure to allow for efficient log-ratio model fitting.  We first describe the algorithm and then examine the relationship between this procedure and standard forward stepwise selection applied to the expanded feature matrix $Z$.\\

\begin{algorithm}[H] \label{alg:afs}
{\bf Algorithm \ref{alg:afs}:} Approximate Forward Stepwise Selection \\ \ \\
Standardize the features to have mean 0 and variance 1.  Begin with $\S = \emptyset$ and $r = y - \bar{y}$.

\For{j =  1 to k}{
	\begin{enumerate}
	\item Fit a single-variable linear regression on features $1$ to $p$ using residual $r$ as the \\ response.
	\item Select the feature $\log(x_i / x_j)$ where $i$ corresponds to the feature with largest positive coefficient and $j$ corresponds to the feature with largest negative coefficient.  Add this  \\ ratio to the set of selected ratios $\S$.
	\item Regress out the log-ratio features in $\S$ together with an intercept from $y$ to obtain a \\ new residual vector $r$.
	\end{enumerate}
}
\end{algorithm} \ \\

We now explain in what sense this procedure approximates forwards stepwise selection. For linear regression on all log-ratio terms, at each step forward stepwise selection will choose $i$ and $j$ to maximize
\begin{equation} \label{eq:fs}
\frac{r^\top (\log(x_i) - \log(x_j))}{\sqrt{\norm{\log(x_i)}^2 + \norm{\log(x_j)}^2 + 2 \log(x_i)^\top \log(x_j)}}
\end{equation}
whereas approximate forward stepwise will choose $i$ and $j$ to maximize
\begin{equation*}
\frac{r^\top (\log(x_i) - \log(x_j))}{\sqrt{\norm{\log(x_i)}^2 + \norm{\log(x_j)}^2}}
\end{equation*}
where $r$ denotes the residual at that step.  Approximate forward stepwise procedure selects a log-ratio with high explanatory power at each step, ignoring correlation among the features to make the computation feasible.  We note that when the vectors $\log(x_i)$ are mutually orthogonal for $i = 1,...,p$, the proposed approximate stepwise procedures coincides with the exact forward stepwise solution on the expanded feature set $Z$. The smaller the correlations are among the $\log(x_i)$, the more likely that this procedure will make the same selection as forward stepwise selection. 

Equation $\ref{eq:fs}$ shows that we can fit forward stepwise faster than usual in this setting, since we don't need to project out the selected feature from all approximately $\binom{p}{2}$ features at each step, as is usually done [see e.g. \cite{esl}]. The approximate forward stepwise procedure is even faster than forward stepwise regression on the expanded feature set $Z$: at each step both procedures require the computation of $p$ inner products, but standard forward stepwise also requires the computation of the $\binom{p}{2}$ inner products among the features.  When the total number of steps $k$ is small relative to $p$, the complexity for approximate forward stepwise selection will be $O(npk)$ and for forward stepwise selection will be $O(np^2)$.  We examine the runtime and and statistical performance of this procedure in simulations in section \ref{sec:sims} and on a real data set section \ref{sec:real-data}.

\section{Model Verification via Hypothesis Testing} \label{sec:testing}
In section \ref{sec:lr-models} we motivated the use of log-ratio models for positive signal data. We proceeded to develop estimation methods for these models in sections \ref{sec:lrl-estimator} and \ref{sec:afs}. For a practitioner applying such methods, it is essential to have model diagnostics for evaluating the resulting model fits.  In section \ref{sec:lrl-estimator} and the associated proofs in the Appendix, it is shown that a linear model in the logarithmically transformed variables (\ref{eq:loglinearmodel}) corresponds to a log ratio model (\ref{eq:logratio}) if and only if 
\begin{equation} \label{eq:sum-coef-zero}
\sum_{i=j}^p \beta_j = 0.
\end{equation}
As a goodness-of-fit test of the log-ratio model, a practitioner can conduct a formal hypothesis test of (\ref{eq:sum-coef-zero}).  We present two tests, a classical test and a more modern selective inference test.

\subsection{Classical Testing}
If $p < n$ and the model matrix of the logarithmically transformed variables is of full rank, then we can use OLS to fit a linear model of the form (\ref{eq:loglinearmodel}).  Under the assumption that the errors $\epsilon$ are i.i.d. normally distributed, we can test the linear hypothesis given by (\ref{eq:sum-coef-zero}) using the F-test. Testing linear hypothesis of coefficients from OLS regression is common in the statistical literature, and can be carried out in R using e.g. the {\tt linear.hypothesis} function in the {\tt car} package [\cite{car-package}].

\subsection{Selective Inference} \label{subsec:selective-inf}
When doing model selection, classical hypothesis testing no longer applies since the object of inference is based on the (random) data.  The selective inference formalism has been developed in recent years to address the problem of inference in the presence of model selection [\cite{post-selection}, \cite{lee-et-al}]. In analogy with classical goodness-of-fit testing in nested models, we will use results from selective inference theory to essentially compare the one-step log-ratio lasso estimator to the standard lasso estimator on the feature $(\log{x_j})_{j=1,..,p}$.  Our proposed test considers the sparse model selected by the lasso rather than the less relevant full model, and applies even when $p > n$.  The proposed test does {\em not} assume correctness of the linear model, only the Gaussianity of the error terms. 

Let $M$ a subset of features together with signs for each, and let $X_M$ be the the feature matrix containing only the selected features.  Let $\beta_j^{(M)} := \E[(X_M^\top X_M)^{-1} X_M^\top y]$ be the partial regression coefficients in a model consisting only of features in $M$.  We will test the hypothesis that 
\begin{equation} \label{eq:selective-hypothesis}
\sum_{j\in M} \beta^{(M)}_j = 0
\end{equation}
where $M$ is the support set identified by the lasso. With this hypothesis we are asking the question: given the selection of the lasso, is there any log-ratio model in the selected variables consistent with the data?  In situations where a researchers is considering a log-ratio model, this is a sensible question to ask. If there is sufficient resolution in the data to reject this hypothesis, this indicates to the researcher that the log-ratio model is not appropriate and implies that the data is not purely compositional data.

\begin{proposition}[Post-selective test of the log-ratio model]\label{prop:p-values}
Let $F_{\mu, \sigma^2}^{[a,b]}$ be the CDF of a $N(\mu, \sigma^2)$ random variable truncated to the set $[a,b]$.  Then there exist quantities $a(y)$ and $b(y)$ independent of $\eta_M^\top y$ such that
\begin{equation} \label{eq:p-value}
F_{1^\top \beta^{(M)}, \norm{\eta_M}^2}^{\cup_s[a(y), b(y)]}(\eta_M^\top y) | \{\hat{M} = M\} \sim \text{Unif}(0,1).
\end{equation}
\end{proposition}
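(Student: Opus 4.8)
The plan is to follow the polyhedral, truncated-Gaussian framework of \cite{lee-et-al}, specialized to the contrast that tests the sum of the selected partial regression coefficients. First I would make the test statistic explicit: take $\eta_M = X_M (X_M^\top X_M)^{-1} \mathbf{1}$, so that $\eta_M^\top y = \mathbf{1}^\top (X_M^\top X_M)^{-1} X_M^\top y = \sum_{j \in M} \hat\beta_j^{(M)}$ is an unbiased estimate of $\sum_{j \in M}\beta_j^{(M)} = \mathbf{1}^\top \beta^{(M)}$, with variance $\sigma^2 \norm{\eta_M}^2$ (matching the normalization in the statement with $\sigma^2 = 1$). The strategy is to show that, conditional on the selection event and on the component of $y$ orthogonal to $\eta_M$, the scalar $\eta_M^\top y$ is a Gaussian truncated to a set whose endpoints do not depend on $\eta_M^\top y$, so that the probability integral transform yields a $\mathrm{Unif}(0,1)$ pivot.

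Second, I would give the polyhedral characterization of the selection event. Writing out the KKT conditions of the lasso on the features $(\log x_j)_{j=1,\dots,p}$, the event that the lasso selects support $M$ with a fixed sign pattern $s$ on the active coefficients is exactly a polyhedron $\{y : A_{M,s}\, y \le b_{M,s}\}$, where $A_{M,s}$ and $b_{M,s}$ are built from $X$, $\lambda$, and the active/inactive block structure [see \cite{lee-et-al}]. The event $\{\hat M = M\}$ is then the finite union over the admissible sign patterns $s$ of these polyhedra, which is the source of the union $\cup_s$ in the statement.

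Third, decompose $y = z + \frac{\eta_M}{\norm{\eta_M}^2}\,(\eta_M^\top y)$, where $z = (I - \eta_M \eta_M^\top / \norm{\eta_M}^2)\, y$ is the projection of $y$ onto the orthogonal complement of $\eta_M$. Because $y$ is Gaussian and $z$ and $\eta_M^\top y$ are uncorrelated, they are independent. Substituting this decomposition into each inequality $a_i^\top y \le b_i$ makes the left-hand side affine in the scalar $t := \eta_M^\top y$, so each polyhedron restricts $t$ to an interval $[\mathcal{V}_s^-(z), \mathcal{V}_s^+(z)]$ whose endpoints are functions of $z$ alone (obtained by solving the linear inequalities for $t$, separating those with positive, negative, and zero coefficient on $t$). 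Taking the union over $s$ gives the truncation set $\cup_s [a(y), b(y)]$ of the statement, and the key point is that these endpoints depend on $y$ only through $z$ and are therefore independent of $\eta_M^\top y$.

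Finally, conditional on $z$ and on $\{\hat M = M\}$, the variable $\eta_M^\top y$ is distributed as $N(\mathbf{1}^\top \beta^{(M)}, \norm{\eta_M}^2)$ truncated to $\cup_s[a(y), b(y)]$; applying the CDF of its own (continuous) truncated law gives a $\mathrm{Unif}(0,1)$ variable by the probability integral transform. Since the truncation endpoints are $z$-measurable and independent of $\eta_M^\top y$, this conditional uniformity holds for every value of $z$, and averaging over the conditional law of $z$ given $\{\hat M = M\}$ yields uniformity conditional on the selection event alone, which is the claim. I expect the main obstacle to be the bookkeeping of the polyhedral characterization --- verifying that the KKT inequalities for $\{\hat M = M\}$ reduce cleanly to the union-over-signs form and that the endpoints $a(y), b(y)$ are genuinely measurable functions of $z$ not involving $\eta_M^\top y$ --- rather than the final distributional step, which is routine once the geometry is in place.
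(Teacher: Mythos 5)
Your proposal is correct and follows essentially the same route as the paper: both specialize the polyhedral selection framework of \cite{lee-et-al} to the contrast $\eta_M = X_M (X_M^\top X_M)^{-1} \mathbf{1}$, whose inner product with $y$ estimates $\mathbf{1}^\top \beta^{(M)}$, and then invoke the truncated-Gaussian pivot. The only difference is that the paper cites Theorem 5.3 of \cite{lee-et-al} as a black box, whereas you unpack its internal argument (the decomposition $y = z + c\,\eta_M^\top y$, the interval truncation per sign pattern, and the union over signs $s$) --- your explicit treatment of the union over sign patterns in fact matches the stated proposition more faithfully than the paper's appendix, which writes the truncation set for a single fixed $s$.
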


This proposition gives a conditional p-value for the post-selective hypothesis given in (\ref{eq:selective-hypothesis}):
\begin{equation*}
p_0 = F_{0, \norm{\eta_M}^2}^{\cup_s[a(y), b(y)]}(\eta_M^\top y).
\end{equation*}
The pivotal quantity from (\ref{eq:p-value}) is monotone decreasing in the Gaussian mean parameter $1^\top \beta^{(M)}$, so this is a one-sided p-value.  For our context, we usually prefer the two-sided p-value given by $p_0^\prime = 1 - 2|p_0 - \frac{1}{2}|$. Technical details are provided in the Appendix.

\subsubsection*{Numerical example}
\begin{figure} 
\begin{center}
\includegraphics[scale = 1]{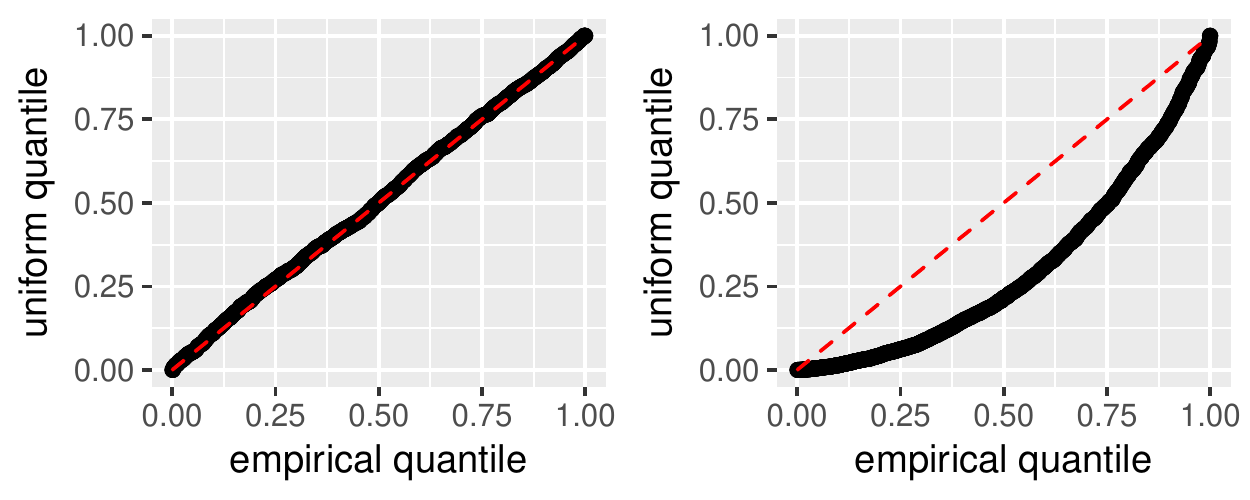}
\end{center}
\caption{{\em Results from the numerical example of subsection \ref{subsec:selective-inf}. The left panel corresponds to the log-ratio model with one signal, in which case the null hypothesis holds whenever the first two features are both selected.  The right panel corresponds to a single unpaired signal, in which case the null hypothesis does not hold.}}
\label{fig:p-values}
\end{figure}

We now examine the post-selective test in a simple setting.  We take $n=100, p=30$ and $X_{i,j} \simiid N(0,1)$.  We use the feature matrix $Z_{i,j} = \log(|X_{i,j}|)$ and generate $y = Z \beta + \epsilon$ with $\epsilon \sim N(0,I_n)$.  We examine the results with $\beta = (2,-2,0,...,0)^\top$ which corresponds to a model with a single large log-ratio signal $\log(\frac{X_1}{X_2})$ and $\beta = (2, 0, ... ,0)^\top$ which is not equivalent to any log-ratio model.  The signals are large enough that the true support is always contained in the selected support. One-sided p-values are computed using proposition \ref{prop:p-values}. We present the result of 2000 repetitions in figure \ref{fig:p-values}.  We observe close agreement with the theory; when the log-ratio model holds the p-values appear to be uniform and when the log-ratio model does not hold the p-values are noticeably sub-uniform.

\section{Simulations}\label{sec:sims}
We now examine the performance of our proposed methods for fitting log-ratio models with simulation experiments.  We examine the following seven methods:

\begin{enumerate}
	\item {\em Approximate forward stepwise} (approx-fs): the approximate forward stepwise procedure described in algorithm $\ref{alg:afs}$. 
	\item {\em Forward stepwise selection} (fs): forward stepwise selection applied on the logarithmically transformed features.
	\item {\em Ridge regression} (ridge): ridge regression applied to the logarithmically transformed features.
	\item {\em Single-stage log-ratio lasso} (single-stage): the method described in (\ref{eq:lr-lasso}).
	\item {\em Two-stage log-ratio lasso} (two-stage): algorithm \ref{alg:two-stage} using forward stepwise selection for the pruning stage.
	\item {\em Two-stage log-ratio lasso} (two-stage-conservative):  the variation of algorithm \ref{alg:two-stage} described at the end of subsection \ref{subsec:two-stage}, again using forward stepwise selection for the pruning stage. 
	\item {\em Lasso} (vanilla-lasso): the usual lasso estimator on the logarithmically transformed raw features.
\end{enumerate}
All tuning parameters are chosen by cross-validation.  Methods 2,3, and 7 are fitting linear models in the logarithmically transformed feature space of the form (\ref{eq:loglinearmodel}) whereas methods 1,4,5, and 6 are fitting log-ratio models of the form (\ref{eq:logratio}).

\subsection{Experiment 1: Two Log-ratio Signals} 

We first examine the performance of our estimator when the data is generated from a log-ratio model.  We consider the following model, consisting of two log-ratio terms of different amplitudes:
\begin{equation*}
y_i = 2 s \log(\frac{x_{i,1}}{x_{i,2}}) + s \log(\frac{x_{i,3}}{x_{i,4}}) + \epsilon_i \text{ for } i = 1,...,n.
\end{equation*}
We take $\epsilon_i \simiid N(0,1)$.  In the following simulations we use $n = 100$, $p = 30$, $X _{i,j} \simiid |N(0, 1)|$.  The signal strength $s$ is taken across a grid of values from 0 to 3.  We present the result in figure \ref{fig:two-signal}.

\begin{figure}
\begin{center}
\includegraphics[scale = .95]{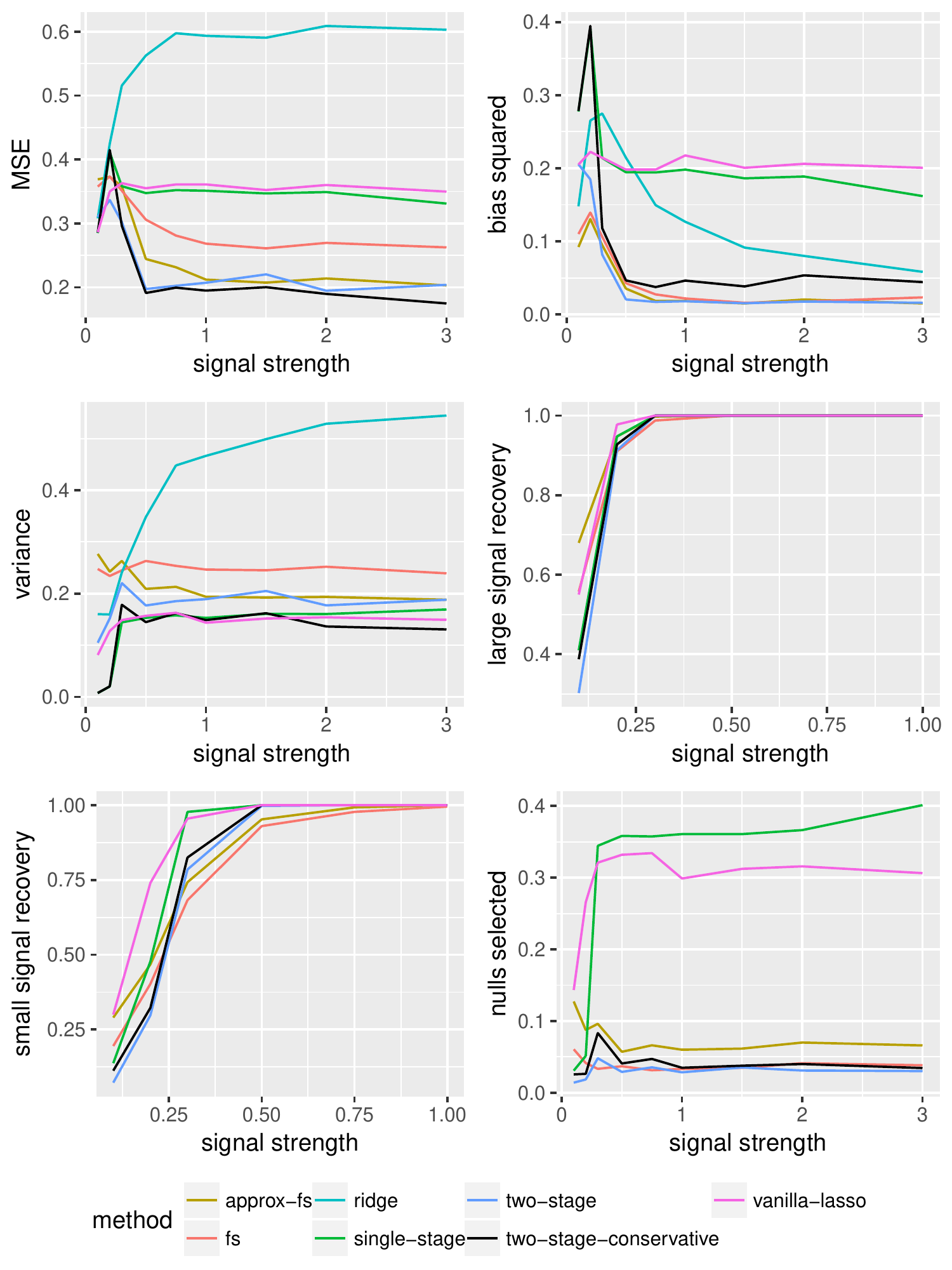}
\caption{{\em Results of experiment 1: MSE and support recovery of log-ratio lasso in the sparse log-ratio model. The ``large signal recovery'' and ``small signal recovery'' graphs report the proportion of times that the true large signal and true small signal are selected, respectively.  The ``nulls selected'' graph shows the average fraction of null variables that are selected.}}
\label{fig:two-signal}
\end{center}
\end{figure}

\subsubsection*{MSE, bias, and variance}
We find that there is a large regime of signal strengths where the two-step procedure preforms significantly outperforms the original lasso and ridge regression.  For coefficients from .5 to about 3, there is a MSE reduction of about 40\% relative to the lasso. The two step procedure has very low bias in the sparse setting, because it sets many coefficients to zero.  It has slightly more variance than lasso, due to its discontinuous nature.  We note that the conservative two-stage procedure has lower variance and higher bias than the two-stage procedure, as expected.  We also note that the lasso and the single-stage log-ratio lasso are quite close, with the single-stage log-ratio lasso performing slightly better. This is expected; the single-stage log-ratio lasso has one extra piece of true information built into the procedure, the fact that the sum of the coefficients must be zero.  Approximate forward stepwise selection has competitive MSE to the two-stage procedures.  The approximate forward stepwise has superior performance to standard forward stepwise in this case, because approximate forward stepwise is picking out log-ratios, whereas forward stepwise is choosing single predictors.  Overall, the two-stage procedures have the best performance in terms of MSE.

\subsubsection*{Support recovery}
We next consider the support recovery properties of these procedures. Ridge regression is fitting a dense model, so it is omitted from the following discussion. The lasso and single-stage procedure recovers the signals slightly more often than the two-step procedure. This is expected, because the two-step procedure is a pruning of the single stage procedure. The two-step procedure selects very few null variables. This explains why this procedure has much better MSE and is an appealing aspect of this procedure in scientific contexts. The approximate forward stepwise procedure selects slightly more nulls than the two-stage procedures, and recovers the true signals slightly less frequently. Forward stepwise selection selects  roughly the same number of non-nulls as the two-stage procedures, but selects the true signals slightly less often.  Overall, the two-stage procedures have the best support recovery properties; these procedures recover the true signals very frequently and rarely select null variables.

\subsection{Experiment 2: Robustness to Model Misspecification}
We now generate data from a model that does not consist only of log-ratio terms.  The data generating process is now:
\begin{equation*}
y = 2 s \log(\frac{x_1}{x_2}) + s \log(\frac{x_3}{x_4}) + .3 \log(x_5) + \epsilon.
\end{equation*}
Notice the inclusion of an unpaired raw term $.3 \log(x_5)$, so in this case we say the log-ratio model is misspecified.  The amplitude of additional term is chosen to be large enough that it can be detected with high probability by the standard cross-validated lasso.  We present the results in figure \ref{fig:mispec}.

Even in the misspecified setting, we see that the two-step procedure has low MSE, again significantly outperforming both lasso and ridge regression. Forward stepwise selection has slightly better MSE that the two-stage procedures, and the approximate forward stepwise procedure has slightly worse MSE. We again see that the lasso and the single-stage procedure have similar performance.  In this case, forward stepwise selection selects the fewest null variables, followed closely by the two-stage procedures and approximate forward stepwise selection. The two stage procedures recover the true signals slightly more often than the forward stepwise and approximate forwards stepwise procedures.  This simulation study gives us some confirmation that even in the presence of moderate model misspecification, the log-ratio lasso procedure will retain good performance.

\begin{figure}
\begin{center}
\includegraphics[scale = .95]{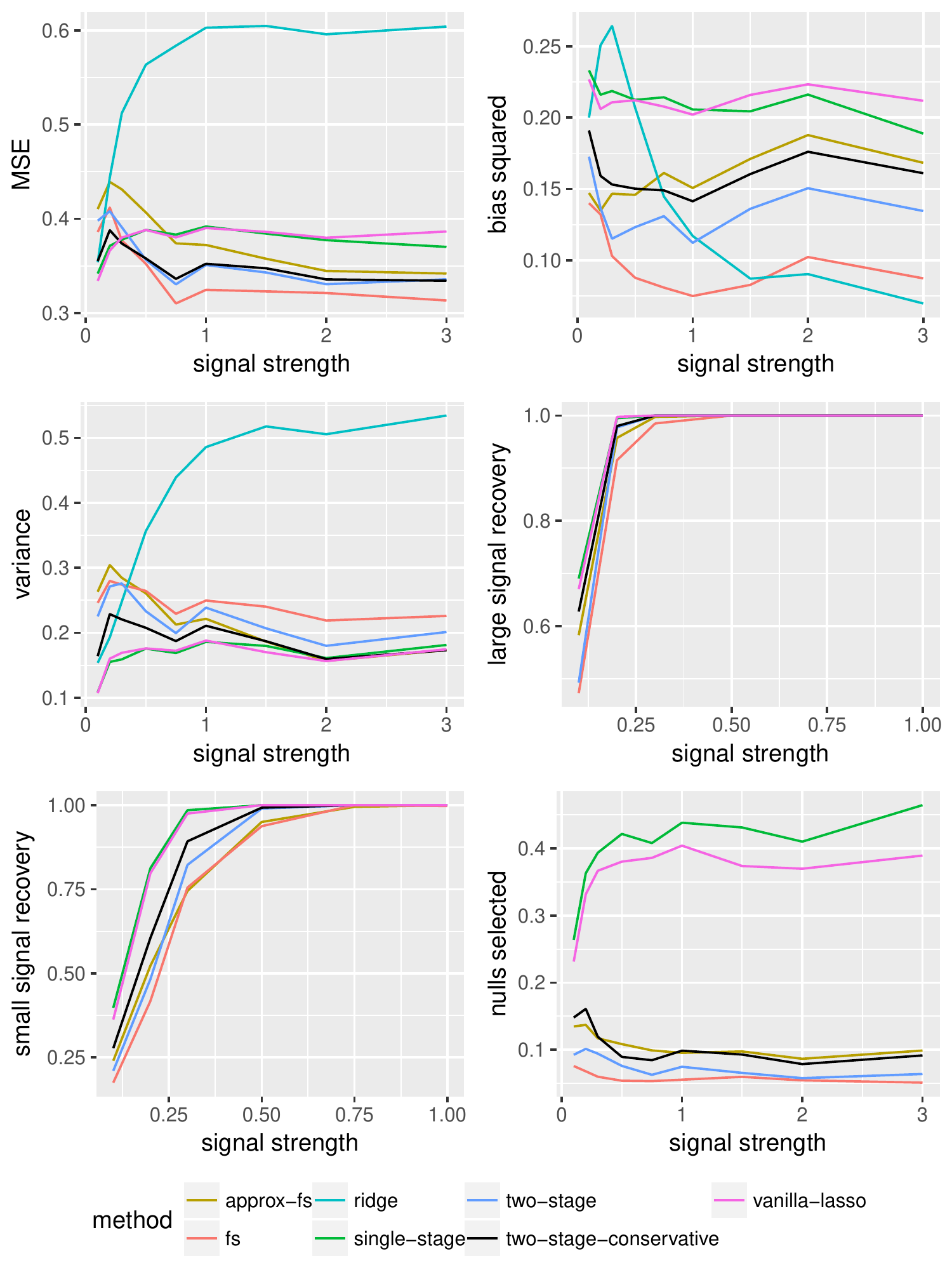}
\caption{{\em Results of experiment 2: performance in the presence of model misspecification.  The ``large signal recovery'' and ``small signal recovery'' graphs report the proportion of times that the true large log-ratio signal and true small log-ratio signal are selected, respectively.  The ``nulls selected'' graph shows the average fraction of null variables that are selected.}}
\label{fig:mispec}
\end{center}
\end{figure}

\subsection{Experiment 3: Computational Efficiency}
Lastly, we compare the computational efficiency of the methods.  The results are presented in figure \ref{fig:runtime-sims}.  We find that approximate forward stepwise selection offers a significant speedup over standard forward stepwise selection.  For a log-ratio model with $p$ individual terms, $k$ steps of approximate forward stepwise will scale as $O(nkp)$ whereas $k$ steps of na\"ive forward stepwise will scale as $O(npk^2)$.

We find that constrained lasso offers a significant speedup over a na\"ive lasso fitting procedure applied to all log-ratio terms.  Recall that by the results in section 2, the solution to these two optimization problems are equivalent.  For a model with $d$ predictors with $p >> n$, lasso has computational complexity approximately $O(nd^2)$.  For a log-ratio model with $p$ raw features, this means that na\"ive lasso fitting will have complexity $O(nd^4)$ whereas constrained lasso fitting will have complexity $O(nd^2)$.

Although the asymptotic scaling of approximate forward stepwise is better than that of the constrained lasso, empirically we find that the constrained lasso procedure is faster even for large problem sizes since the lasso solver within the {\tt glmnet} package is heavily optimized and internally runs routines in FORTRAN.  Nevertheless, the R implementation of the approximate forward stepwise procedure runs in less than 1 second on a notebook computer for a log-ratio model with 500 observations and 500 raw features. Standard forward stepwise regression in R on all log-ratios becomes computationally infeasible for problems of this size.

\begin{figure}
\begin{center}
\includegraphics[scale = .85]{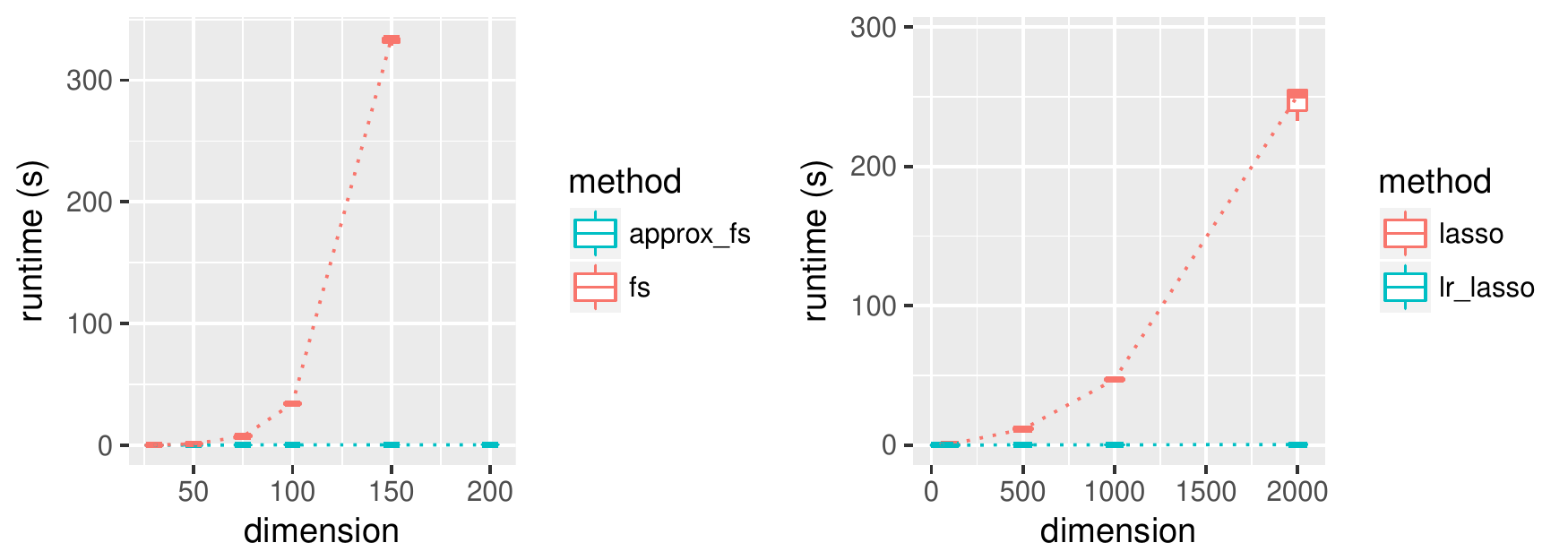}
\end{center}
\caption{{\em Results of experiment 3, a comparison of the runtime of 10 steps of the approximate forward stepwise selection and standard forward stepwise procedure (left) and na\"ive lasso versus constrained lasso fitting (right).  Fitting for forward stepwise selection and and na\"ive lasso are done on the expanded feature set of all log-ratios, which is of size $\binom{\text{dimension}}{2}$. Runtimes are from a Macbook pro with 3.3 GHz Intel Core i7 processor.  Forward stepwise selection was fit using the {\tt leaps}[\cite{leaps}] R package and lasso was fit with the {\tt glmnet}[\cite{glmnet}] R package. We note that {\tt glmnet} is internally running FORTRAN code, which accounts for the large difference in runtime among the methods in the left versus right panels.}}
\label{fig:runtime-sims}
\end{figure}

\section{Real Data Example: Cancer Proteomics} \label{sec:real-data}
\label{sec:real-data}

We now apply the log-ratio lasso methods to a proteomics data set collected and analyzed in \cite{banerjee-et-al}.  The data consist of measurements from 54 patients, each of whom has either a healthy prostate or prostate cancer.  For each individual 5 to 18 locations on a tissue sample are examined, and for each location the intensity of 53 chemical markers is measured using mass spectrometry.  The resulting data set has 618 observations of 53 features.  Researchers are working to use data sets like this to classify tissue samples as healthy or cancerous in minutes, so that they can be used by physicians in the middle of a surgery.  In this data set, observations from the same patient are not independent, and the cross-validation process is done block-wise so that each patient falls entirely in the training fold or test fold.  A validation set of 202 observations from 18 patients is put aside for assessing the accuracy of the final model.  Many of the features are zero, so the value 1 was added to each entry to allow for subsequent logarithmic transformations.

\subsection{Baseline Models}

We perform lasso logistic regression, approximate forward stepwise, and ridge regression on this data set, using a logarithmically transformed feature matrix.  Performance is reported in table \ref{tab:model-results}. Based on previous scientific knowledge, \cite{banerjee-et-al} found that glucose/citrate ratio is a highly predictive feature. Logistic regression on using only this feature and intercept gives 94\% validation set accuracy. We will refer to this model as {\em oracle logistic regression} since it is based on external information.  From this regression model, we see that there is a strong predictive model consisting of only one log-ratio term.  This model is inside the span of the lasso and ridge regression models, but these fitting techniques have very poor predictive performance compared to the oracle model.  Furthermore, lasso logistic regression, does not find this log-ratio in the sense that glucose and citrate do not have the largest positive and smallest negative coefficients.  This is visually represented by the lasso path in figure \ref{fig:lasso-paths}.  A researcher interpreting the fit of the lasso model would not be led to the conclusion that the log-ratio of glucose to citrate is highly important.

\begin{table}
\begin{tabular}{| c c c c |}
\hline
Method & Classification Accuracy & Support Size & Selects Glucose and Citrate? \\
\hline
\hline
Oracle logistic regression & .94 & 2 & Yes \\
\hline
\hline
Ridge regression & .71 & 53 & Yes \\
Approximate FS & .72 & 6 & No \\
Lasso & .73 & 20 & Yes\\
LR lasso (single stage) &.74 & 27  & Yes \\
{\bf LR-lasso (two-stage)} & {\bf .90} & {\bf 4}  & {\bf Yes}\\
\hline
\end{tabular}
\caption{{\em A comparison of predictive accuracy and parsimony of various models.  We note that the log-ratio lasso is coming very close to the performance of the oracle model which
is constructed on the basis of external scientific information.}}
\label{tab:model-results}
\end{table}

\subsection{Log-ratio Lasso}

We next fit the single-stage logistic log-ratio lasso.  The log-ratio lasso results in 74\% validation set accuracy.  In figure \ref{fig:lasso-paths} we see that the lasso path from the single stage log-ratio lasso procedure.  A researcher using this method would clearly see that the ratio of glucose to citrate is highly important, and we will soon see the two stage procedure easily picks this out as the most important feature.

\begin{figure}
\includegraphics[scale = 0.43]{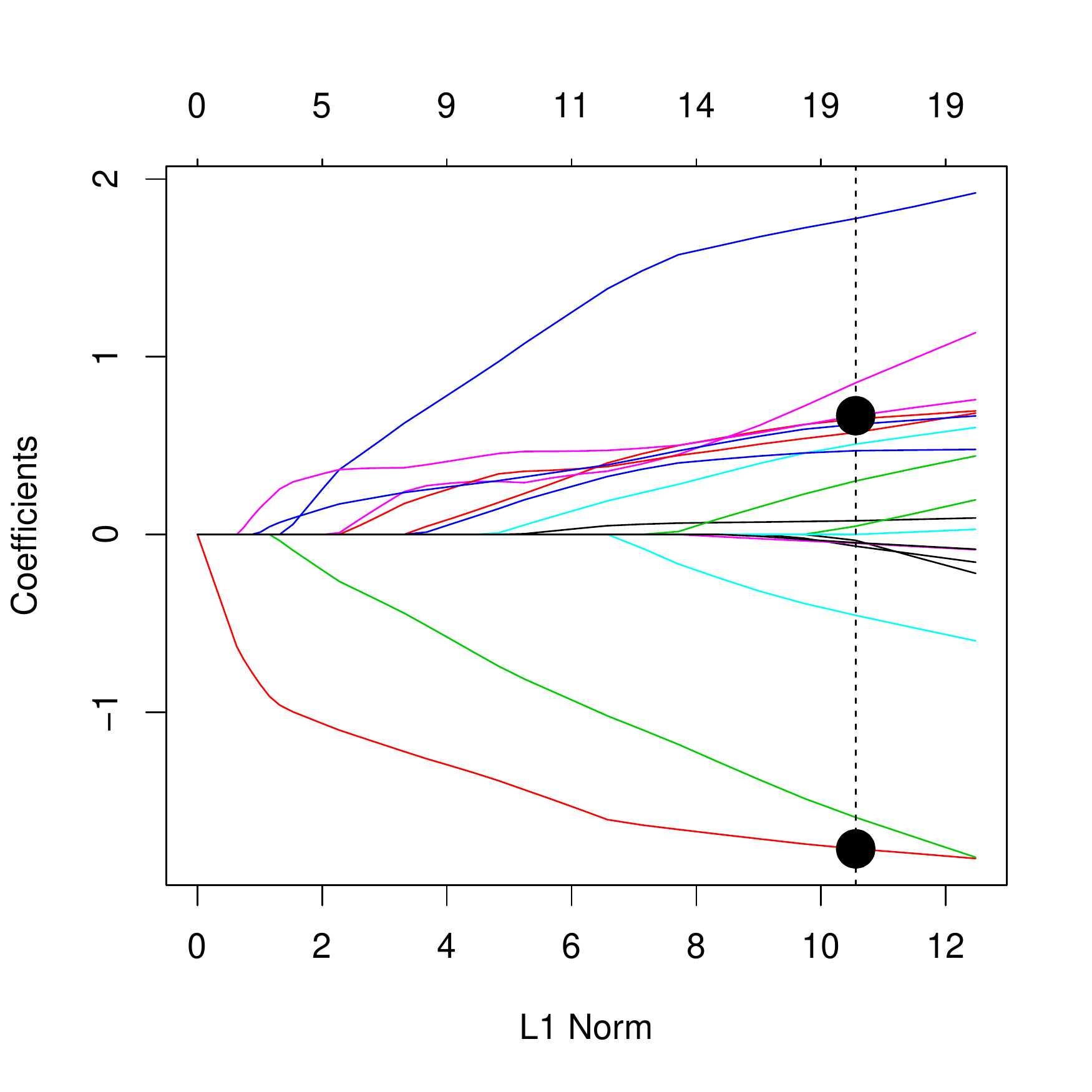}
\includegraphics[scale = 0.43]{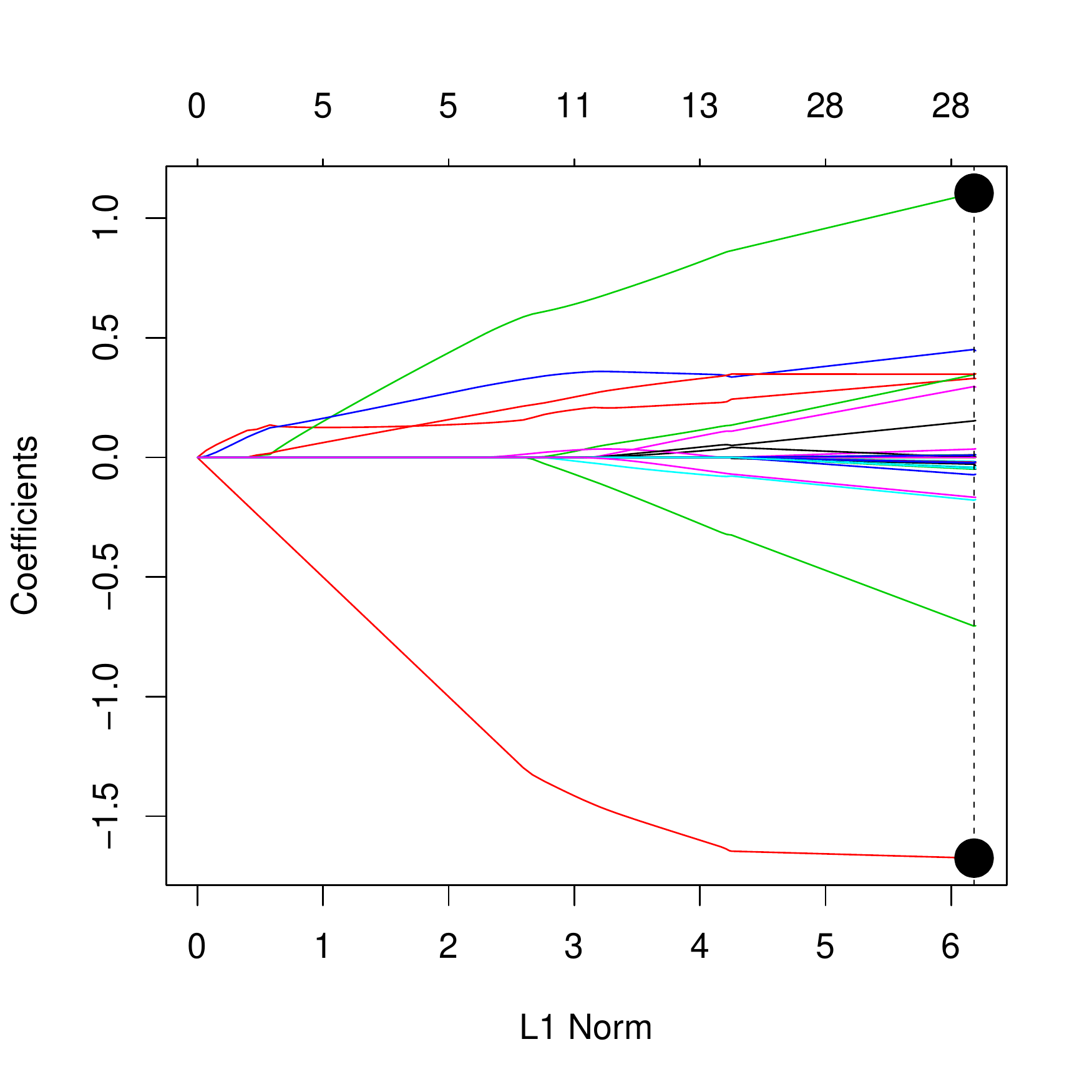}
\caption{{\em A comparison of the selection paths from lasso logistic regression (left) and the single stage log-ratio lasso (right). The top horizontal labels indicate how many variables are in the model at each point along the path. Dashed vertical lines indicate the tuning parameter selected by cross-validation. The coefficients of glucose and citrate for the optimal vlaue of the tuning parameter are marked with large circles.  Notice that citrate is not easily picked out on the left plot, but it is easily picked out on the right.}}
\label{fig:lasso-paths}
\end{figure}

Lastly, we fit the two-stage log-ratio lasso.  This model selects two log-ratio terms, presented in table \ref{tab:lrl-model}.  The first term selected is the oracle model term previously reported in the literature: the logarithm of glucose divided by citrate.  In addition to being highly parsimonious, the predictive performance of the model is close to that of the oracle model, and much better than that of the baseline methods.  Figure \ref{fig:box-roc} shows that the predictions from the log-ratio lasso on the validation set.  The two-stage log-ratio lasso procedure nicely separate the two classes, and the ROC curve shows that the predictive performance is much better for any choice of the false positive rate.  The model rivals that of the oracle model in both model parsimony and predictive accuracy. It is highly appealing the log-ratio lasso is able to systematically find such a model fit.

\begin{table}
\begin{center}
\begin{tabular}{| c c |}
\hline
Feature & Coefficient \\
\hline
$\log({X_{glucose} /  X_{citrate/isocitrate})}$ & $.24$ \\
$\log({X_{[glucose/fructose]Cl-} /X_{oleic \ acid \ dimer})}$ & $-.09$ \\
\hline
\end{tabular}
\end{center}
\caption{{\em Final model selected by two-stage log-ratio lasso}.}
\label{tab:lrl-model}
\end{table}

\begin{figure}
\begin{center}
\includegraphics[scale = .8]{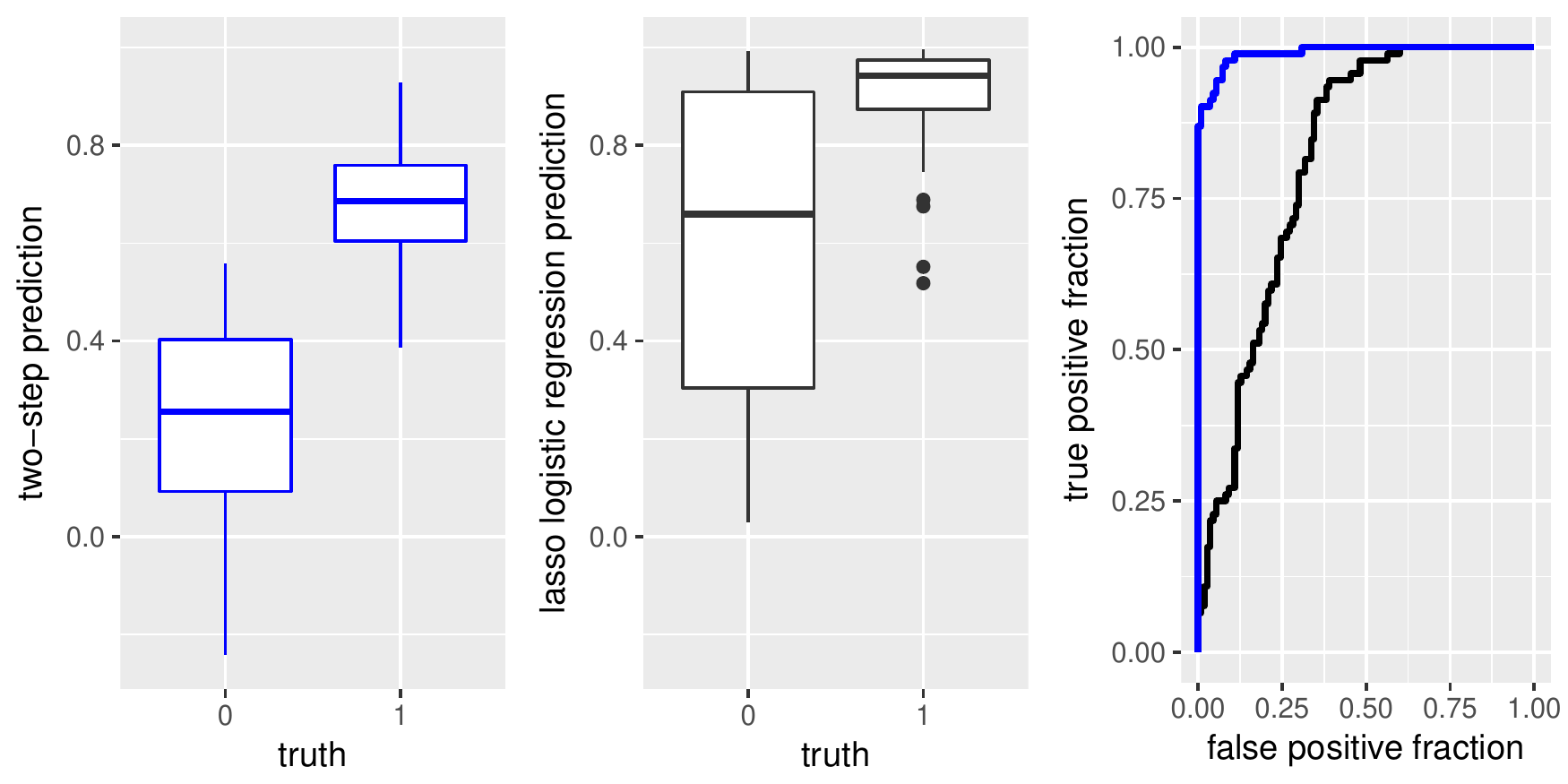}
\caption{{\em A comparison of the predictions on a validation set generated by lasso logistic regression (black) and two-stage log-ratio lasso (blue) using box plots and ROC curves.  The AUC is .81 for lasso logistic regression and .99 for two-stage log-ratio lasso.}}
\label{fig:box-roc}
\end{center}
\end{figure}


\section{Discussion} \label{sec:discussion}
We have formulated a new and useful notion of sparsity for compositional data based on the log-ratio model (\ref{eq:logratio}). We have introduced a novel, principled variable selection procedure for this models. We prove the equivalence of our method to a constrained lasso problem with a small number of variables. To the best of our knowledge, there are no existing specialized variable selection procedures for this model, and the na\"ive application of standard techniques would require the creation of $\binom{p}{2}$ additional features, which quickly makes the runtime and storage requirements prohibitively large. In contrast, the method introduced in this work only requires solving a modified lasso optimization with $p$ features, which enjoys favorable runtime and storage even for very large $p$. We extend this method with a second pruning step which leads to highly sparse models and greatly improves the performance in simulation experiments and on a real data set. On real data, the method recovers a very sparse model containing features of known relevance and with high predictive accuracy. The method appears to be very well-suited for imaging data in biological and medical domains, where the relative intensity of the raw features are the scientifically meaningful quantities, and where researchers are often jointly interested in predictive accuracy and model parsimony.

\subsubsection*{Future directions}
We conclude by pointing to a few directions for further work.
 \begin{itemize}
 \item {\em Extensions beyond the linear model.}  In section \ref{sec:lrl-estimator} we mentioned that the low-dimensional characterization of the log-ratio lasso holds for other models such as GLMs and the Cox proportional hazards model. Since binary outcome data and censored time data are common in medical imaging applications, it would be of great interest to extend the methodology introduced in this paper to these cases.
 \item {\em Sophisticated second stage pruning.}  In subsection \ref{subsec:two-stage} we saw that $L_1$-penalization is not sufficient to enforce adequate sparsity in the log-ratio model (\ref{eq:logratio}).  We added a second pruning stage to pair the terms into log-ratios.  In this work, we concentrated on forward stepwise selection, but there are other useful sparse regression procedures, typically cast as a non-convex optimization problem.  It would be of interest to see if using these more sophisticated methods in the pruning stage leads to better performance.
 \item {\em Selective inference.} A flurry of recent work in the statistical literature has introduced methods for post-selective tests and confidence intervals. Theoretical tools developed to provide post-selective confidence intervals for the usual lasso estimator can likely be used to develop post-selective tests and confidence intervals for both the single-stage and two-stage log-ratio lasso estimator.
 \end{itemize}


\subsection*{Acknowledgments}
Stephen Bates was supported by NIH grant T32 GM096982. Robert Tibshirani was supported by NIH grant 5R01 EB001988-16 and NSF grant 19 DMS1208164.  

\newpage
\bibliographystyle{agsm}
\bibliography{references}

\newpage \appendix
\section{Proof of Theorem \ref{thm:main}} \label{ap:proofs}
The following propositions together give proof of the theorem.  Let $b: \R^{\binom{p}{2}} \to \R^p$ be the map that takes a $\theta$ from a log-ratio lasso feature space to the corresponding $\beta$ in the standard feature space:
\begin{equation*}
b(\theta)_k = \sum_{j=1}^{k-1} -\theta_{j,k} + \sum_{j = k+1}^p \theta_{k,j}.
\end{equation*}

\begin{prop}
For $\beta = b(\theta)$ we have $\sum_{k=1}^p \beta_k = 0$.
\end{prop}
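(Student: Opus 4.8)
The plan is to prove the identity by direct substitution followed by elementary index bookkeeping; this is exactly the cancellation asserted informally in the ``Mathematical intuition'' subsection, and it amounts to a short finite computation rather than anything requiring a genuine idea. First I would substitute the definition of $b(\theta)_k$ into the target sum and split it into two double sums,
\begin{equation*}
\sum_{k=1}^p \beta_k = \sum_{k=1}^p \Big( \sum_{j=1}^{k-1} -\theta_{j,k} + \sum_{j=k+1}^p \theta_{k,j} \Big) = -\sum_{k=1}^p \sum_{j=1}^{k-1} \theta_{j,k} + \sum_{k=1}^p \sum_{j=k+1}^p \theta_{k,j}.
\end{equation*}

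The key step is to observe that both double sums range over the same set of index pairs. The first ranges over all $(j,k)$ with $1 \le j < k \le p$, so it equals $-\sum_{1 \le j < k \le p} \theta_{j,k}$. In the second, the constraint $j > k$ means the first subscript is the smaller one, so after renaming the dummy indices it too equals $\sum_{1 \le j < k \le p} \theta_{j,k}$. The two sums are thus equal in magnitude and opposite in sign, and cancel to give $0$. An equivalent and perhaps more transparent route is a double-counting argument: fix an arbitrary pair $\{a,b\}$ with $a < b$ and track every appearance of $\theta_{a,b}$ in $\sum_k \beta_k$. It contributes $+\theta_{a,b}$ to $\beta_a$ (via the second inner sum, when the running index equals $b$) and $-\theta_{a,b}$ to $\beta_b$ (via the first inner sum, when the running index equals $a$), and these are its only two occurrences; summing over all such pairs, every coefficient cancels.

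I do not expect any substantial obstacle here, since the content is a routine finite identity. The only point requiring care is the index bookkeeping: verifying that the ranges $1 \le j \le k-1$ and $k+1 \le j \le p$ together account for each off-diagonal pair exactly once with each sign, with none omitted or double-counted. Checking the boundary cases ($k=1$ yields an empty first inner sum and $k=p$ an empty second inner sum) confirms that the accounting is airtight and the cancellation is exact.
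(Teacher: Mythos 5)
Your proof is correct and is essentially identical to the paper's own argument: both substitute the definition of $b(\theta)_k$, reindex the two double sums over the common index set $\{(j,k): 1 \le j < k \le p\}$, and observe that they cancel exactly. The double-counting remark (each $\theta_{a,b}$ appearing once with each sign) is the same cancellation stated pairwise, so there is nothing further to add.
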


\begin{proof}
\begin{align*}
\sum_{k=1}^p \beta_k &= \sum_{k=1}^p [\sum_{j = 1}^{k-1} -\theta_{j,k} + \sum_{j = k+ 1}^p \theta_{k,j}] \\
&= \sum_{1 \le j < k \le p} -\theta_{j,k} + \sum_{1 \le j < k \le p} \theta_{j,k} \\
&= 0 
\end{align*}
\end{proof}

\begin{prop}
The model corresponding to $\beta = b(\theta)$ and the model corresponding to $\theta$ have the same sum of squared residuals.
\end{prop}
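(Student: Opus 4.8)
The plan is to reduce the statement to the single linear identity $Z\theta = W\,b(\theta)$ asserted in the main text: once the two linear predictors agree as vectors in $\R^n$, equality of the sum of squared residuals is immediate. Writing $w_m = \log(x_m)$ for the $m$-th column of $W$, the log-ratio model's linear predictor is $Z\theta = \sum_{1 \le j < k \le p}\theta_{j,k}(w_j - w_k)$, since the $(j,k)$ column of $Z$ is exactly $w_j - w_k$.

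First I would expand this double sum and collect, for each fixed index $m \in \{1,\dots,p\}$, the total coefficient multiplying $w_m$. The column $w_m$ picks up a $+\theta_{m,k}$ contribution from every pair in which $m$ is the smaller (numerator) index, i.e. for $k = m+1,\dots,p$, and a $-\theta_{j,m}$ contribution from every pair in which $m$ is the larger (denominator) index, i.e. for $j = 1,\dots,m-1$. Summing these gives coefficient $\sum_{k=m+1}^p \theta_{m,k} - \sum_{j=1}^{m-1}\theta_{j,m}$, which is precisely $b(\theta)_m$ by definition. Hence $Z\theta = \sum_{m=1}^p b(\theta)_m\, w_m = W\,b(\theta)$.

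With $\beta = b(\theta)$ this shows the two linear predictors coincide. Because the response $y$ and the intercept $\mu$ are shared between the two formulations, the residual vectors $y - \mu\mathbf{1} - Z\theta$ and $y - \mu\mathbf{1} - W\beta$ are identical, and therefore so are their squared Euclidean norms; this is exactly the claimed equality of the sums of squared residuals.

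The only real obstacle is the sign bookkeeping in the re-indexing step: one must verify that each column $w_m$ enters with a positive sign precisely when it serves as the numerator of a ratio and with a negative sign precisely when it serves as the denominator, and that the two index ranges $k > m$ and $j < m$ together account for all pairs involving $m$ without overlap. Everything else is a routine consequence of the linearity of the map $\theta \mapsto Z\theta$, so I would keep the write-up to the collection-of-coefficients computation followed by the one-line observation that equal fitted values force equal residuals.
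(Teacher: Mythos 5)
Your proof is correct and matches the paper's own argument: both hinge on the same re-indexing computation showing that the pairwise sum $\sum_{j<k}\theta_{j,k}(\log x_{i,j}-\log x_{i,k})$ equals $\sum_{m}b(\theta)_m\log x_{i,m}$ (the paper expands from the $\beta$ side, you collect from the $\theta$ side, which is the same identity read in the opposite direction), and then conclude that equal fitted values force equal residuals and hence equal sums of squares.
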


\begin{proof}
\begin{align*}
\sum_{k=1}^p \beta_k \log(x_{i,k}) &= \sum_{k=1}^p[\sum_{j=1}^{k-1} -\theta_{j,k} + \sum_{j = k + 1}^p \theta_{j,k}] \log(x_{i,k})\\
	&= \sum_{1 \le j < k \le p} - \theta_{j,k} \log(x_{i,k}) + \theta_{j,k}\log(x_{i,j})\\
	&= \sum_{1 \le j < k \le p}\theta_{j,k}\log{\frac{x_{i,j}}{x_{i,k}}}
\end{align*}
Thus the two models have the same fitted value for each observation $l = 1,...,n$, and hence the same sum of squared residuals.
\end{proof}

\begin{prop}
For any $\beta$ such that $\sum_{k=1}^p \beta_k =0$, there exists a $\theta$ such that $\beta = b(\theta)$ with the property that $\norm{\beta}_1 = 2 \norm{\theta}_1$.
\end{prop}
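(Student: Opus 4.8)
The plan is to read the map $b$ as a discrete flow (signed incidence) operator and to realize $\beta$ as the net flow of a transportation plan. Writing $e_1,\dots,e_p$ for the standard basis of $\R^p$, the definition of $b$ says exactly that
\[
b(\theta) = \sum_{1\le j<k\le p}\theta_{j,k}(e_j - e_k),
\]
so each coordinate $\theta_{j,k}$ behaves like a flow on the edge $\{j,k\}$ that deposits $+\theta_{j,k}$ at node $j$ and $-\theta_{j,k}$ at node $k$. From this viewpoint $b(\theta)_k$ is the net flow out of node $k$, and the hypothesis $\sum_k \beta_k = 0$ is precisely the global balance condition that makes $\beta$ realizable as such a net flow. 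The quantity $\norm{\theta}_1$ is the total unsigned flow, so the target identity $\norm{\beta}_1 = 2\norm{\theta}_1$ amounts to building a preimage that moves no more mass than is strictly necessary.

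First I would split the coordinates into the positive support $P=\{k:\beta_k>0\}$ and the negative support $N=\{k:\beta_k<0\}$, and set $S=\sum_{k\in P}\beta_k$. Balance gives $S=-\sum_{k\in N}\beta_k$ and hence $\norm{\beta}_1=2S$. Then I would construct a transportation plan shipping the supplies $\beta_k$ ($k\in P$) to the demands $-\beta_k$ ($k\in N$) --- for instance by the greedy northwest-corner rule: repeatedly pick $i\in P$ and $j\in N$ with positive remaining supply and demand, route $f=\min$ of the two residuals along $\{i,j\}$, and decrement both. Each step zeroes at least one endpoint, so each edge carries flow at most once and every edge used joins $P$ to $N$. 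Finally I would convert the plan to $\theta$ by matching signs to the index order: on an edge $\{i,j\}$ carrying flow $f$ with $i\in P$, $j\in N$, set $\theta_{i,j}=f$ if $i<j$ and $\theta_{j,i}=-f$ if $j<i$, so that in either case the edge contributes $+f$ to coordinate $i$ and $-f$ to coordinate $j$ while $|\theta|=f$.

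Two verifications then close the argument. For $b(\theta)=\beta$ I would check flow conservation node by node: each edge incident to a source $i$ contributes $+f$ to $b(\theta)_i$ (tracking the $j<i$ versus $j>i$ cases through the two sums defining $b$), so $b(\theta)_i$ equals the total supply shipped from $i$, namely $\beta_i$; symmetrically $b(\theta)_j=\beta_j$ at sinks, and $b(\theta)_k=0=\beta_k$ at untouched nodes. For the norm, since every nonzero $\theta$-coordinate equals the flow on a distinct edge from $P$ to $N$, $\norm{\theta}_1$ equals the total routed flow, which is exactly $S$ because the plan exhausts all supply; hence $2\norm{\theta}_1=2S=\norm{\beta}_1$.

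The main obstacle is not conceptual but bookkeeping: one must carry the sign convention forced by the ordering $j<k$ cleanly through the definition of $b$ so that conservation and the magnitude count are exact, and confirm that no flow is ever cancelled (which is automatic because all edges run between $P$ and $N$). As a reassuring check of minimality --- matching the claim in the main text that this is the minimum-$L_1$ preimage --- I would note a cut bound: for \emph{any} $\theta'$ with $b(\theta')=\beta$, summing the relation $b(\theta')_k=\beta_k$ over $k\in P$ telescopes the within-$P$ edges to zero and leaves $S=\sum_{k\in P}\beta_k$ expressed through edges crossing between $P$ and its complement, whence $S\le\norm{\theta'}_1$; the construction above attains this bound.
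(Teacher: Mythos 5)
Your proof is correct, and it shares the paper's overall skeleton --- split the coordinates into positive support $P$ and negative support $N$, ship all mass directly from $P$ to $N$, and verify conservation plus the norm identity --- but the actual construction of the preimage differs. The paper uses the proportional (product) coupling $\theta_{i,j} = 2|\beta_i||\beta_j|/\norm{\beta}_1$ for $i \in P$, $j \in N$, which is a dense plan (every pair in $P \times N$ carries flow) but admits a one-line algebraic verification that $b(\theta)_k = \beta_k$; it also sidesteps sign bookkeeping by relabeling so that all positive coefficients come first. Your greedy northwest-corner plan instead produces a \emph{sparse} preimage, with at most $|P|+|N|-1$ nonzero ratios, at the cost of handling the interleaved-index sign convention ($\theta_{j,i} = -f$ when the sink precedes the source) explicitly; given that the whole point of the surrounding theorem is sparsity of log-ratio representations, a sparse certificate is arguably the more natural object here. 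Your closing cut bound --- that any $\theta'$ with $b(\theta')=\beta$ satisfies $\norm{\theta'}_1 \ge \tfrac{1}{2}\norm{\beta}_1$, so the construction is $L_1$-minimal --- is not needed for this proposition, but it cleanly substantiates the claim made informally in the paper's main text (and echoed in the appendix's subsequent proposition) that the minimum-$L_1$ preimage satisfies $2\norm{\theta}_1 = \norm{\beta}_1$, and it does so by a softer argument than the paper's exchange/contradiction step.
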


\begin{proof}
Without loss of generality, suppose $\beta_1, ..., \beta_{p^+} \geq 0$ and $\beta_{p^+ + 1}, ..., \beta_p < 0$. Let $\theta_{i,j} = 0$ if $i,j \leq p^+$ or if $i,j > p^+$. For $1 \leq i \leq p^+ < j \leq p$, let $\theta_{i,j} = \frac{2|\beta_i||\beta_j|}{\norm{\beta}_1}$.

Now $\beta_k = \sum_{i=1}^{k-1} -\theta_{i,k} + \sum_{i=k+1}^p\theta_{k,i}$ so for $k \leq p^+$ we have:
\begin{align*}
b(\theta)_k &= \sum_{i = 1}^{k-1}-\theta_{i,k} + \sum_{i=k+1}^p\theta_{k,i} \\
&= \sum_{i = p^+ +1}^p \theta_{k,i} \\
&= \sum_{i = p^+ + 1}^p \frac{2|\beta_i||\beta_k|}{\norm{\beta}_1} \\
&= 2 \beta_k \sum_{i = p^+ + 1}^p \frac{|\beta_i|}{\norm{\beta}_1} \\
&= \beta_k.
\end{align*}
The last equality follows from the fact that $\norm{\theta}_1 = \sum_{j=1}^p |\beta_j| = \sum_{j=1}^{p+} \beta_j - \sum_{j=p^+ + 1}^p \beta_j$ and $\sum_{j=1}^p \beta_j = 0$.  The analogous computation holds for $k > p^+$, so $\beta = b(\theta)$.

Now notice:
\begin{align*}
\norm{\beta}_1 &= \sum_{k=1}^p|\beta_k| \\
&= \sum_{k = 1}^p |\sum_{i = 1}^{k-1} -\theta_{i,k} + \sum_{i = k+1}^p \theta_{k,i}| \\
&= \sum_{k = 1}^{p^+} |\sum_{i = p^+ + 1}^p \theta_{k,i}| + \sum_{k = p^+ + 1}^p|\sum_{i = 1}^{p^+} -\theta_{i,k}| \\
&= \sum_{1 \le k < i \le p}|\theta_{k,i}| + \sum_{1 \le i < k \le p}|-\theta_{i,k}| \\
&= 2 \norm{\theta}_1. 
\end{align*}
\end{proof}

\begin{remark}
We can see from this last proof that the log-ratio lasso fit is not identifiable: many different values of $\theta$ correspond to both the same fit and the same 1-norm penalty.
\end{remark}

\begin{prop}
Suppose $\theta$ is a solution to the log-ratio lasso optimization problem, and let $\beta = b(\theta)$.  Then $\norm{\beta}_1 = 2 \norm{\theta}_1$.
\end{prop}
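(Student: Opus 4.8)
The plan is to show that the inequality $\norm{b(\theta)}_1 \le 2\norm{\theta}_1$ holds for \emph{every} $\theta$, and then to argue that a minimizer of the log-ratio lasso objective cannot have strict inequality, since otherwise we could exhibit a strictly cheaper feasible point with the same fit.

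First I would establish the general bound. For any $\theta \in \R^{\binom{p}{2}}$, the triangle inequality gives
\begin{equation*}
\norm{b(\theta)}_1 = \sum_{k=1}^p \left| \sum_{j=1}^{k-1} -\theta_{j,k} + \sum_{j=k+1}^p \theta_{k,j} \right| \le \sum_{k=1}^p \left( \sum_{j=1}^{k-1} |\theta_{j,k}| + \sum_{j=k+1}^p |\theta_{k,j}| \right) = 2\norm{\theta}_1,
\end{equation*}
because each coordinate $\theta_{j,k}$ with $j<k$ appears exactly twice on the right-hand side: once inside $b(\theta)_k$ (as $-\theta_{j,k}$) and once inside $b(\theta)_j$ (as $+\theta_{j,k}$). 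Hence $\norm{\beta}_1 \le 2\norm{\theta}_1$ always holds, and it remains only to rule out strict inequality.

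Next I would argue by contradiction. Suppose $\theta$ solves the log-ratio lasso and, setting $\beta = b(\theta)$, that $\norm{\beta}_1 < 2\norm{\theta}_1$. By the first proposition above we have $\sum_{k=1}^p \beta_k = 0$, so the third proposition supplies a $\theta'$ with $b(\theta') = \beta$ and $\norm{\beta}_1 = 2\norm{\theta'}_1$. Using the linear identity $Z\theta = W[b(\theta)]$, we obtain $Z\theta' = W\beta = Z\theta$, so $\theta'$ produces exactly the same fitted values, and therefore the same sum of squared residuals, as $\theta$. But $\norm{\theta'}_1 = \tfrac{1}{2}\norm{\beta}_1 < \norm{\theta}_1$, so $\theta'$ attains a strictly smaller value of the penalized objective in (\ref{eq:lr-lasso}). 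This contradicts the optimality of $\theta$, forcing $\norm{\beta}_1 = 2\norm{\theta}_1$.

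The only genuinely nontrivial ingredient is the third proposition, which guarantees that any zero-sum $\beta$ admits a representation $\theta'$ achieving the lower bound $2\norm{\theta'}_1 = \norm{\beta}_1$; everything else reduces to the triangle inequality plus the minimality of $\theta$. The main point to be careful about is confirming that $\theta'$ and $\theta$ share the same fit — this is immediate from $b(\theta') = b(\theta)$ together with $Z(\cdot) = W[b(\cdot)]$ — so that comparing the two objective values isolates the penalty term and makes the strict improvement visible.
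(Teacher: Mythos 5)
Your proof is correct, but it takes a genuinely different route from the paper's. The paper argues \emph{locally}: it observes that if $\norm{\beta}_1 \neq 2\norm{\theta}_1$ then some coordinate of $\beta$ exhibits cancellation between entries of $\theta$ sharing a common index (e.g.\ $\theta_{i,k} < 0 < \theta_{j,k}$), and it then merges the two offending log-ratios via an explicit perturbation $\tilde\theta$ that keeps the fitted values fixed while strictly reducing the $L_1$ norm, contradicting optimality. Your argument is \emph{global}: you first prove the universal bound $\norm{b(\theta)}_1 \le 2\norm{\theta}_1$ by the triangle inequality (a fact the paper uses only implicitly), and then invoke the preceding proposition --- every zero-sum $\beta$ admits a representative $\theta'$ with $2\norm{\theta'}_1 = \norm{\beta}_1$ --- to exhibit a strictly cheaper point with the same fit whenever the bound is strict. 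What your route buys is rigor and economy: you sidestep the paper's ``without loss of generality'' case analysis, which as written treats only one cancellation pattern and whose displayed perturbation in fact needs $\tilde\theta_{i,j} = \theta_{i,j} - \theta_{j,k}$ rather than $\theta_{i,j} + \theta_{j,k}$ to leave the fit unchanged, since $\log(x_j/x_k) = \log(x_i/x_k) - \log(x_i/x_j)$; and you reuse work already done in the earlier proposition rather than redoing a norm computation. What the paper's local argument buys instead is a constructive picture of \emph{how} a suboptimal $\theta$ is improved --- by rerouting mass through a shared index --- which is exactly the mechanism behind the non-uniqueness remark that follows. One caveat common to both proofs: the contradiction requires $\lambda > 0$, since at $\lambda = 0$ the penalty does not enter the objective and the conclusion can fail; it would be worth stating this hypothesis explicitly.
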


\begin{proof}
Suppose not.  Then $|\beta_k| \neq \sum_{i < k}|-\theta_{i,k}| + \sum_{k > i}|\theta_{k,i}|$ for some $k$.  We will now show that the 1-norm of $\theta$ can be reduced without changing the fitted values, which is a contradiction. Suppose without loss of generality that there exist $i < j < k$ such that $\theta_{i,k} < 0 < \theta_{j,k}$ with $|\theta_{i,k}| > |\theta_{j,k}|$. Then consider a new fit $\tilde{\theta}$ with $\tilde{\theta} = \theta$ except for the following:
\begin{align*}
\tilde{\theta}_{i,k} &= \theta_{i,k} + \theta_{j,k}\\ 
\tilde{\theta}_{j,k} &= 0 \\
\tilde{\theta}_{i,j} &= \theta_{i,j} + \theta_{j,k}.
\end{align*}
$\tilde{\theta}$ results in the same fit as $\theta$ but has a 1-norm reduced by $\theta_{j,k}$.
\end{proof}

Combining these four propositions proves the main theorem.

\section{Post-Selective Inference Technical Details}
We will state the relevant technical results from \cite{lee-et-al} and then customize them for our setting.  Let $\hat{M}$ be the support set and signs selected by the lasso.  \cite{post-selection} and \cite{lee-et-al} establish that the event $\{ \hat{M} = M\}$ can be expressed as a polyhedron:
\begin{equation} \label{lasso-polyhedron}
\{\hat{M} = M\} = \{A(M,s)y \le b(M, s)\}.
\end{equation}
The matrices $A(M,s)$ and vectors $b(M,s)$ are given by the following:
\begin{align*}
A(M,s) &:= 
	\begin{bmatrix}
		\frac{1}{\lambda} X_{-M}^\top (I - P_M) \\
		\frac{-1}{\lambda} X_{-M}^\top (I - P_M) \\
		-\diag(s)(X_M^\top X_M)^{-1} X_M^\top 
	\end{bmatrix} \\
b(M,s) &:= 
	\begin{bmatrix}
		1 - X_{-M}^\top (X_M^\top X_M)^{-1} X_M^\top s \\
		1 + X_{-M}^\top (X_M^\top X_M)^{-1} X_M^\top s \\
		-\lambda \diag(s) (X_M^\top X_M)^{-1} s
	\end{bmatrix}
\end{align*}
where $P_M$ denotes the orthogonal projection onto the column span of $X_M$. Using this result, \cite{lee-et-al} compute the conditional distribution of $\eta_M^\top y$ given $\{\hat{M} = M\}$ for any vector $\eta_M$.  That work explicitly treats the case where $\eta_M$ is chosen to test hypotheses about the partial regression coefficients $\beta^{(M)}_j$, which is often interest.  For our setting, we instead use these results to test whether a log-ratio model is consistent with the observed lasso fit, which we formulated as a formal hypothesis in (\ref{eq:selective-hypothesis}).  Taking $\eta_M = 1_M^\top (X_M^\top X_M)^{-1} X_M^\top$, we have that
\begin{align*}
\eta_M^\top \E[y] &= 1_M^\top (X_M^\top X_M)^{-1} X_M^\top \E[y] \\
	&= 1_M^\top \E [(X_M^\top X_M)^{-1} X_M^\top y] \\
	&= 1_M^\top \beta_M. 
\end{align*}
Thus, this choice of $\eta_M$ corresponds to testing the hypothesis in (\ref{eq:selective-hypothesis}).  From here, an application of the \cite{lee-et-al} machinery yields a pivotal quantity for $\eta_M^\top y$ after conditioning on $\{\hat{M} = M\}$, which we encapsulate in the following proposition.
\begin{proposition}[Post-selective test of the log-ratio model, detailed version]
Let $F_{\mu, \sigma^2}^{[a,b]}$ be the CDF of a $N(\mu, \sigma^2)$ random variable truncated to the set $[a,b]$. Let $z := (I - P_{\eta_M})y$ be the residual of the projection of $y$ onto $\eta_M$, which is independent of $\eta_M^\top y$, and let $c := \frac{\eta_M}{\norm{\eta_M}^2}$.  Define:
\begin{align*}
V^+_{M,s}(z) &:= \max_{j: (A(M,s)c)_j < 0} \frac{b(M,s)_j - (A(M,s)z)_j}{(A(M,s)c)_j}\\
V^-_{M,s}(z) &:= \min_{j: (A(M,s)c)_j > 0} \frac{b(M,s)_j - (A(M,s)z)_j}{(A(M,s)c)_j}.
\end{align*}
Then the following holds:
\begin{equation*}
F_{1^\top \beta^{(M)}, \norm{\eta_M}^2}^{[V_{M,s}^-(z) ,V_{M,s}^+(z)]}(\eta_M^\top y) | \{\hat{M} = M\} \sim \text{Unif}(0,1).
\end{equation*}
\end{proposition}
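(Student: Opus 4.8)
The plan is to reduce this to the polyhedral selective-inference framework of \cite{lee-et-al}. The starting point is their characterization of the lasso selection event, for a fixed support $M$ and sign vector $s$, as a polyhedron $\{A(M,s)y \le b(M,s)\}$ in $y$ (the matrices $A(M,s)$ and $b(M,s)$ are recorded in the Appendix). First I would choose the test direction so that the scalar functional $\eta_M^\top y$ has the correct mean. Taking $\eta_M^\top = 1_M^\top (X_M^\top X_M)^{-1} X_M^\top$ gives $\eta_M^\top \E[y] = 1_M^\top \beta^{(M)}$ by linearity of expectation and the definition of $\beta^{(M)}$, so that $\eta_M^\top y$ is an unbiased estimator of the quantity $1^\top \beta^{(M)}$ appearing in the hypothesis (\ref{eq:selective-hypothesis}). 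This step uses only the definition of the partial regression coefficient and no assumption on the correctness of the linear model, consistent with the claim that the test requires only Gaussianity of the errors.

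The key structural step is the orthogonal decomposition $y = c\,(\eta_M^\top y) + z$, where $c := \eta_M / \norm{\eta_M}^2$ and $z := (I - P_{\eta_M})y$ is the residual of the projection of $y$ onto $\eta_M$. Because $y$ is Gaussian and $\eta_M^\top y$ is uncorrelated with $z$, the two are independent; this is the single place where Gaussianity of the errors enters. Substituting this decomposition into each row of the constraint $A(M,s)y \le b(M,s)$ converts the polyhedron into a system of scalar inequalities in $t := \eta_M^\top y$: a row with $(A(M,s)c)_j > 0$ yields an upper bound on $t$, a row with $(A(M,s)c)_j < 0$ yields a lower bound, and a row with $(A(M,s)c)_j = 0$ yields a constraint involving $z$ alone. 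Collecting these gives an interval $[V^-_{M,s}(z),\, V^+_{M,s}(z)]$ whose endpoints depend on $y$ only through $z$, hence are independent of $t$.

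Next I would remove the conditioning on the sign vector. Since $\{\hat{M} = M\} = \bigcup_s \{A(M,s)y \le b(M,s)\}$, the event restricts $t$ to the union $\bigcup_s [V^-_{M,s}(z),\, V^+_{M,s}(z)]$, which I identify with the set $\cup_s [a(y), b(y)]$ in the statement; crucially, every endpoint is a function of $z$ only, so the full truncation set is independent of $t$. Conditioning jointly on $\{\hat{M} = M\}$ and on $z$ therefore leaves $t = \eta_M^\top y$ distributed as an $N(1^\top \beta^{(M)}, \norm{\eta_M}^2)$ variable truncated to this fixed measurable set. The final step is the probability integral transform: evaluating the truncated-Gaussian CDF $F^{\cup_s[a(y),b(y)]}_{1^\top \beta^{(M)},\,\norm{\eta_M}^2}$ at $t$ produces a $\text{Unif}(0,1)$ variable conditionally on $z$, and since this holds for every value of $z$, it persists after marginalizing over $z$, giving the asserted conditional uniformity given $\{\hat{M} = M\}$.

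The main obstacle I anticipate is the bookkeeping surrounding the union over sign patterns: one must verify carefully that the endpoints $V^\pm_{M,s}(z)$ depend on $y$ only through $z$, so that the union truncation region is genuinely independent of $t$, and that conditioning on the unsigned support $\{\hat{M} = M\}$ rather than the signed support does not disturb the truncated-Gaussian structure. Once it is established that the truncation region is a fixed (given $z$) set independent of $t$, the probability integral transform applies verbatim, and the remaining work—computing $(A(M,s)c)_j$ and assembling the $V^\pm_{M,s}(z)$—is routine linear algebra.
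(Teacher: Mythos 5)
Your proposal is correct and takes essentially the same route as the paper: the paper's proof is simply a citation of Theorem 5.3 of \cite{lee-et-al} with $\eta_M = 1_M^\top (X_M^\top X_M)^{-1} X_M^\top$ and the polyhedral characterization of the selection event, and your argument is exactly an unpacking of that theorem's proof (orthogonal decomposition $y = c\,(\eta_M^\top y) + z$, truncation limits depending on $y$ only through $z$, then the probability integral transform for the truncated Gaussian). The only remark worth making is that the detailed version conditions on the \emph{signed} selection event, so the single interval $[V^-_{M,s}(z), V^+_{M,s}(z)]$ already suffices there; your union-over-signs step is what yields the main-text version of the proposition rather than being needed for the statement at hand.
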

\begin{proof}
This is an application of Theorem 5.3 of \cite{lee-et-al} with $\eta_M = 1_M^\top (X_M^\top X_M)^{-1} X_M$ using the characterization of the lasso selection event in (\ref{lasso-polyhedron}).
\end{proof}

\section{Solving the Constrained Lasso Optimization Problem} \label{ap:optimization}
The constrained lasso optimization problem given in equation \ref{eq:constr-lasso} is a convex optimization problem in $p$ variables.  It can be cast as an optimization problem with a quadratic objective function in $2p$ variables with only linear inequality constraints and a single linear equality constraint: 
\begin{align*}
\minimize_{\beta^+_1, \beta^-_1 ,..., \beta^+_p, \beta^-_p} \ \  & \frac{1}{2} \sum_{i=1}^n[\sum_{j=1}^p (y_i - \beta^+_j \log(x_{i,j}) + \beta^-_j \log(x_i,j))^2] + \lambda(\sum_{j=1}^p \beta^+_j + \beta^-_j) \\
\text{subject to \ \ } & \beta^+_j \ge 0 \text{ \ for \ } j=1,...,p \\
	& \beta^-_j \ge 0 \text{ \ for \ } j=1,...,p \\
	&\sum_{j = 1}^p \beta^+_j - \beta^-_j = 0.
\end{align*}
Such an optimization problem can be efficiently solved with standard optimization libraries such as the popular open-sourced {\tt CVX}[\cite{cvx}] for MATLAB or {\tt CVXPY}[\cite{cvxpy}] for python.  

The constrained lasso optimization problem can also be solved efficiently using lasso solvers such as ${\tt glmnet}$ which allow for weighted observations. One simply augments the data with an additional data point with all features equal to 
$1$, and response value zero. By assigning this value a large weight, the resulting solution $\beta$ will have $\sum_{j=1}^p \beta_j \approx 0$.  The value of $\sum_{j=1}^p \beta_j$ can be made arbitrarily small with large values of the weight.  Similarly, for the logistic regression analog of the constrained lasso, one simply augments the feature matrix with two entries of large equal weight.  One entry is assigned value 1 to all features and value 1 to the response.  The other entry is assigned value 1 to all features and value 0 to the response.  Because dedicated lasso solvers use specialized tricks to improve performance, this approach will typically be much faster than using a general-purpose convex optimization solver. \cite{constrained-lasso} give a detailed analysis of a coordinate descent algorithm for the constrained lasso that is similar to this proposal in the special case where the lasso solver is using coordinate descent [\cite{glmnet}].

\end{document}